\newcommand{\bc}{\begin{center}}
\newcommand{\ec}{\end{center}}
\newcommand{\bfl}{\begin{flushleft}}
\newcommand{\efl}{\end{flushleft}}
\newcommand{\beqa}{\begin{eqnarray}}
\newcommand{\eeqa}{\end{eqnarray}}
\newcommand{\beqan}{\begin{eqnarray*}}
\newcommand{\eeqan}{\end{eqnarray*}}
\newcommand{\beq}{\begin{equation}}
\newcommand{\eeq}{\end{equation}}
\newcommand{\beit}{\begin{itemize}}
\newcommand{\eeit}{\end{itemize}}
\newcommand{\bit}{\begin{itemize}}
\newcommand{\eit}{\end{itemize}}
\newcommand{\ben}{\begin{enumerate}}
\newcommand{\een}{\end{enumerate}}
\newcommand{\blem}{\begin{lemma}}
\newcommand{\elem}{\end{lemma}}
\newcommand{\bthm}{\begin{theorem}}
\newcommand{\ethm}{\end{theorem}}
\newcommand{\bdefn}{\begin{definition}}
\newcommand{\edefn}{\end{definition}}
\newcommand{\bpf}{\begin{proof}}
\newcommand{\epf}{\end{proof}}
\newcommand{\bcor}{\begin{corollary}}
\newcommand{\ecor}{\end{corollary}}
\newcommand{\bprop}{\begin{proposition}}
\newcommand{\eprop}{\end{proposition}}
\newtheorem{proposition}{Proposition}
\newtheorem{lemma}{Lemma}
\newtheorem{theorem}{Theorem}
\newtheorem{corollary}{Corollary}
\newtheorem{assumption}{Assumption}
\title{\huge{Potential Conditional Mutual Information: Estimators and Properties}\\
		\vspace{8mm}
\author{Arman Rahimzamani and  Sreeram Kannan\thanks{Email: \texttt{armanrz@uw.edu} and \texttt{ksreeram@uw.edu}}
\\Department of Electrical Engineering, \\ University of Washington, Seattle}}
\begin{document}

\maketitle

\begin{abstract}
The conditional mutual information $I(X;Y|Z)$ measures the average information that $X$ and $Y$ contain about each other given $Z$. This is an important primitive in many learning problems including conditional independence testing, graphical model inference, causal strength estimation and time-series problems. In several applications, it is desirable to have a functional purely of the conditional distribution $p_{Y|X,Z}$ rather than of the joint distribution $p_{X,Y,Z}$.  We define the potential conditional mutual information as the conditional mutual information calculated with a modified joint distribution $p_{Y|X,Z} q_{X,Z}$, where $q_{X,Z}$ is a potential distribution, fixed airport. We develop  K nearest neighbor based estimators for this functional, employing importance sampling, and a coupling trick, and prove the finite $k$ consistency of such an estimator. We demonstrate that the estimator has excellent practical performance and show an application in dynamical system inference. 
\end{abstract}

\section{Introduction}
Given three random variables $X,Y,Z$, the conditional mutual information $I(X;Y|Z)$ (CMI) is the expected value of the mutual information between $X$ and $Y$ given $Z$, and can be expressed as follows \cite{cover2012elements},
\beqa CMI_{X \leftrightarrow Y|Z} (p_{X,Y,Z}) := I(X;Y|Z) = D(p_{X,Y,Z} || p_{Z} p_{X|Z} p_{Y|Z}). \eeqa

Thus CMI is a functional of the joint distribution $p_{X,Y,Z}$. A basic property of CMI, and a key application, is the following: $I(X;Y|Z) = 0$ iff $X$ is independent of $Y$ given $Z$. This measure depends on the joint distribution between the three variables $p_{X,Y,Z}$. There are certain circumstances where such a dependence on the entire joint distribution is not favorable, and a measure that depends purely on the conditional distribution $p_{Y|X,Z}$ is more useful. This is because, in a way, conditional independence can be well defined purely in terms of the conditional distribution and the measure $p_{X,Z}$ is extraneous. This motivates the direction that we explore in this paper: we define potential conditional mutual information as a function purely  of $p_{Y|X,Z}$ evaluated with a distribution $q_{X,Z}$ that is fixed a-priori. 

\begin{figure}
\begin{subfigure}[t]{.5\textwidth}
\centering
	\includegraphics[scale=.8]{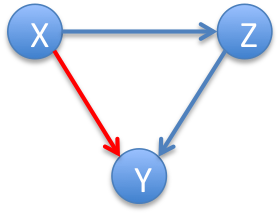}
	\caption{} \label{fig:network}
\end{subfigure}   
\begin{subfigure}[t]{.5\textwidth}
\centering
	\includegraphics[scale=.5]{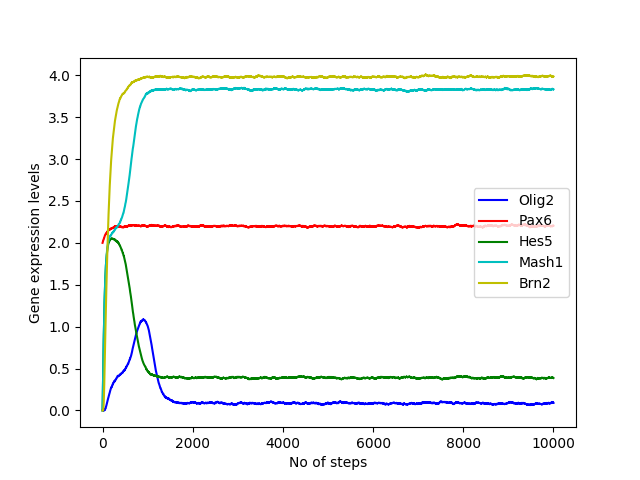}
	\caption{} \label{fig:genes}
\end{subfigure}   
\caption{(a): A causal graph, where the interest is in determining the strength of $X$ to $Y$. (b) A gene expression trace as a function of time for a few example genes.}
\label{fig:network_genes}
\end{figure}

{\em An Example}: Consider the following causal graph where $X \rightarrow Y$, $Z \rightarrow X$ and $Z \rightarrow Y$ shown in Figure~\ref{fig:network}. Let us say $p_{Y|X,Z}$ has a strong dependence on both $X$ and $Z$, say defined by the structural equation $Y = X+Z$. We would like to measure the strength of the edge $X \rightarrow Y$ in this causal graphical model.  One natural measure in this context is $I(X;Y|Z)$. However, if we use $I(X;Y|Z)$ as the strength, the strength goes to zero when $Z \approx X$ and this is undesirable.  In such a case, Janzing et al \cite{janzing2013quantifying} pointed out that a better strength of causal influence is given by the following:
\beqa C(X\rightarrow Y) := D(p_{X,Y,Z} || p_{Z} p_{X} p_{Y|Z}). \eeqa
This causal measure satisfies certain axioms laid out in that paper and is nonzero in the aforesaid example. However, in the case that the distribution $p_{X}$  approaches a deterministic distribution ($X$ is approximately a constant), this measure becomes zero, irrespective of the fact that the relationship from $X$ and $Z$ to $Y$ remains unaltered.  We would like to define a {\em potential dependence measure} that is dependent purely on $p_{Y|X,Z}$ and which has no dependence on the observed $p_{X,Z}$. We note that such a measure should give a (strong) non-zero result if $Y=X+Z$.

{\em The Measure}: We define potential conditional information measure as the conditional mutual information evaluated under a predefined distribution $q_{X,Z}$, and denote it as $qCMI(X \leftrightarrow Y | Z)$, and express it as follows.

\beqa qCMI_{X \leftrightarrow Y | Z}(p_{Y|X,Z}) := CMI_{X \leftrightarrow Y | Z}(q_{X,Z}p_{Y|X,Z}). \eeqa

{\em A Simple Property}: The main question here is how to choose $q_{X,Z}$. A simple property that maybe of interest is the  following, which can be easily stated in case that all three variables $X,Y,Z$ are discrete. In such a case, it will be useful if we can have that  $qCMI_{X \leftrightarrow Y | Z}(p_{Y|X,Z}) = 0$ if and only if $p_{Y|X,Z}$ depends purely only on $Z$. Such a property will be true for qCMI as long as $q_{X,Z}$ is non-zero for every value of $X,Z$. In case that all three variables are real valued, a similar statement can be asserted when $q_{X,Z}$ is a positive everywhere density, under the assumption that $p_{X,Y,Z}$ induces a joint density.

{\em Instantiations}: We will propose some instantiations of the potential CMI by giving examples of the distribution $q_{X,Z}$.
\begin{itemize}
\item CMI: $q_{X,Z} = p_{X,Z} $. Here the $q_{X,Z}$ is the factual measure and hence qCMI devolves to pure CMI. \newline
\item uCMI: $q_{X,Z} = u_{X,Z} $, where $u_{X,Z}$ is the product uniform distribution on $X,Z$. This is well defined when $X,Z$ is either discrete or has a joint density with a bounded support. \newline 
\item nCMI: $q_{X,Z} = n_{X,Z} $, where $n_{X,Z}$ is the i.i.d. Gaussian distribution on $X,Z$. This is well defined when $X,Z$ are real-valued (whether scalar or vector). \newline 
\item maxCMI $ = \max_{q_{XZ}}$ CMI($q_{X,Z} p_{Y|X,Z}$) is defined as an analog of the Shannon capacity in the conditional case, where we maximize the CMI over all possible distributions on $X,Z$. This is akin to tuning the input distribution to maximize the signal in the graph. Note that uCMI or nCMI is not invariant to invertible functional transformations on $X,Z$, whereas maxCMI is indeed invariant to such functional transformations.  \newline 
\item iCMI:  $q_{X,Z} = p_{X}p_Z$ is the CMI evaluated not under the true joint distribution of $X,Z$ but under the product distribution on $X,Z$. This measure is related to the causal strength measure proposed in \cite{janzing2013quantifying}, though not identical. \newline 
\end{itemize}

Note that uCMI, nCMI, maxCMI all satisfy the property that they are zero if and only if $p_{Y|X,Z}$ has no dependence on $X$, whereas CMI and iCMI do not. 

{\em Applications}: A key application of the potential information measures is in testing graphical models, where conditional independence tests are the basic primitive by which models are built \cite{spirtes2000causation,holland1985statistics,dawid1979conditional}.  To give a concrete example of the setting, which motivated us to pursue this line of study, consider the following problem, which can model gene regulatory network inference from time-series data. We observe a set of $n$ time series, $X_i(t)$ for $t=1,2,...T$ with $i=1,2,..n$ and wish to infer the graph of the dynamical system. The underlying model assumption is that $\vec{X}$ is a markov chain with $X_{i}(t)$ depending only on $X_{j}(t-1)$ for $j \in Pa(i)$ and the goal is to determine $Pa(i)$, the set of parents of a given node. This was originally studied in the setting when the variables are jointly Gaussian and hence the dependence is linear (see \cite{granger1969investigating} for the original treatment, and \cite{geiger2015causal,hosseini2016learning} for versions with latent variables). This problem was generalized to the setting with arbitrary probability distributions and temporal dependences in \cite{eichler2012graphical} and studied further in \cite{quinn2015directed}, for one-step markov chains in \cite{sun2015causal} and deterministic relationships in \cite{rahimzamani2016network}. From these works, under some technical condition, we can assert that the following method is guaranteed to be consistent, 
   \begin{equation} x_i \rightarrow x_j \iff  I \{X_i(t-1); X_j(t) | X_{i^c}(t-1) \} > 0.  \end{equation} 
 
 Thus to solve this problem, we estimate the CMI between the aforesaid variables. However, we observed while experimenting with gene regulatory network data (from \cite{qiu2012understanding}), that there is a strange phenomenon; the performance of the inference worsens as we collect more data: the number of data points {\em increases}.

An example of a gene expression time series for a few genes is shown in Figure~\ref{fig:genes}. It is clear that as the number of time points increases, the system is moving into an equilibrium with very little change in gene expression values. This induces a distribution on any $X_i(t)$ which looks more and more like a deterministic distribution.
   

In such a case, an information measure such as CMI which depends on the ``input'' distribution $p_{x_i(t-1)}$ will converge to zero and thus its performance will deteriorate as the number of samples increases. However a measure that depends on the conditional distribution $p_{x_j{t}|x_i(t-1),x_{i}}$ need not deteriorate with increasing number of samples. Thus qCMI is more appropriate in this context (see Sec.~\ref{sec:nonlinear} for performance of qCMI on this problem).

{\em Related work}:
In the case that there is a pair of random variables $X,Y$, recent work \cite{gao2016causal} explored conditional dependence measures which depend only on $p_{Y|X}$. Again in the two-variable case, a measure that had weak dependence on $p_X$ was studied  in \cite{kim2017discovering}. The proposal there was to use the strong data processing constant and hypercontractivity \cite{anantharam2013maximal,polyanskiy2017strong} to infer causal strength; this has strong relationships to information bottleneck \cite{tishby2000information}. In this paper, we extend \cite{gao2016conditional} to conditional independence (rather than independence studied there). In a related but different direction, Shannon capacity, which is a potential dependence metric, was proposed in \cite{kidambi2015shannon} to infer causality from observational data \cite{mooij2016distinguishing}.

{\em Main Contributions}:
In this paper, we make the following main contributions: 
\begin{enumerate}
\item We propose {\em potential conditional mutual information} as a way of quantifying conditional independence, but depending only on the conditional distribution $p_{Y|X,Z}$.
\item We propose {\em new estimators} in the real-valued case that combine ideas from importance sampling, a coupling trick and $k$-nearest neighbors estimation to estimate potential CMI. 
\item We prove that the proposed estimator is {\em consistent} for a fixed $k$, which does not depend on the number of samples $N$. 
\item We demonstrate by simulation studies that the proposed estimator has excellent performance when there are a finite number of samples, as well as an application in gene network inference, where we show that qCMI can solve the non-monotonicity problem. 
\end{enumerate}

\section{Estimator}
In most real settings, we do not have access to either the joint distribution $p_{X,Y,Z}$ or the conditional distribution $p_{Y|X,Z}$, but need to estimate the requisite information functionals from observed samples. We are given $N$ independent identically distributied samples $\{(x_i,y_i,z_i)_{i=1,2,..,N}\}$ from $p_{X,Y,Z}$. In the case of qCMI, the estimator is also given as input the modified distribution $q_{X,Z}$. The estimator needs to estimate qCMI from samples. 

In the case of discrete valued distributions, it is possible to empirically estimate $p_{X,Y,Z}$ from samples and calculate the qCMI from this distribution. We focus our attention here on the case of continuous valued alphabet, where each variable takes on values in a bounded subset of $\mathbb{R}^{d}$. We assume that $X,Y,Z$ are of dimensions $d_x,d_y,d_z$ respectively, and let $f_{X,Y,Z}$ denote the joint density of the three variables (we assume that it exists). In such a case, it is possible to estimate $f_{X,Y,Z}$ using kernel density estimators \cite{devroye1984consistency,sheather1991reliable} and then warp the estimate using the potential measure $q_{X,Z}$. However, it is known that $k$-nearest neighbors based estimators perform better even in the simpler case of mutual information estimation and are widely used in practice \cite{Kra04,khan2007relative}. Therefore in this work, we develop KNN based estimators for qCMI estimation. 

\subsection{Entropy estimation}
Consider first the estimation of the differential entropy of a random variable $X$ with density $f_X$ and observed samples $x_1,...,x_N$.. A simple method to estimate the differential entropy is to use the re-substitution estimator, where we calculate $\hat{h}(X) := \frac{1}{N} \sum_{i=1}^N \log (\hat{f}_X(x_i))$, where $\hat{f}_X$ is an estimate of the density of $X$. We can estimate the density using a KNN based estimator. To do so, we fix $k$ a-priori, and for each sample $x_i$, find the distance $\rho_{k,i}$ to the nearest neighbor. 
\beq \hat{f}_X(x_i) c_d \rho_{k,i}^d \approx \frac{k}{N}. \eeq
This estimator is not consistent when $k$ is fixed, and it was shown in a remarkable result by Kozhachenko and Loenenko \cite{kozachenko1987sample} that the bias is independent of the distribution and can be computed apriori. Thus the following estimator was shown in \cite{kozachenko1987sample} to be consistent for differential entropy. 
\begin{eqnarray*} \hat{h}_{\rm KL}(X)& = & \frac{1}{N} \sum_{i=1}^N  \log \frac{N\rho_{k,i}^d c_d}{k} +  \log k -\psi(k). \end{eqnarray*}

While it is possible to have estimators which fix an $\epsilon$ apriori and then find the number of nearest neighbors to plug into the formula, such estimators do not adapt to the density (some regions will have many more points inside an $\epsilon$ neighborhood  than others) and do not have a consistency proof as well. We mention this as fixed $\epsilon$ estimators are used for a sub-problem in our estimator.

\subsection{Coupling trick}
The conditional mutual information can be written as a sum of $4$ differential entropies, and one can estimate these differential entropies independently using KNN estimators and sum them.
\begin{eqnarray*} I(X;Y|Z) = -h(X,Y,Z) - h(Z) + h(X,Z) + h(Y,Z).  \end{eqnarray*} 
However, even in the case of mutual information, the estimation can be improved by an inspired coupling trick, in what is called the KSG estimator \cite{Kra04}. We note that the original KSG estimator did not have a proof of consistency and its consistency and convergence rates were analyzed in a recent paper \cite{gao2017demystifying}. Also of interest is the fact that the coupling trick has been shown to be quite useful in problems where $X$, $Y$ or both have a mixture of discrete and continuous distributions or components \cite{gao2017estimating}.  

This trick was applied in the context of conditional mutual information estimation in  \cite{frenzel2007partial}. However, we note that this estimator of CMI does not have a proof of consistency to the best of our knowledge. The CMI estimator essentially fixes a $k$ for the $(X,Y,Z)$ vector and calculates for each sample $(x_i,y_i,z_i)$, the distance $\rho_{k,i}$ to the $k$-th nearest neighbor. The estimator fixes this $\rho_{k,i}$ as the distance and calculates the number of nearest neighbors within $\rho_{k,i}$ in the $Z$, $(X,Z)$ and $(Y,Z)$ dimensions as $n_{z,i}, n_{xz,i}, n_{yz,i}$ respectively. The CMI estimator is then given by,
\beqa \hat{CMI} :=  \frac{1}{N} \sum_{i=1}^N \left( \psi(k) - \log(n_{xz,i}) - \log(n_{yz,i}) + \log(n_{z,i}) \right)  + \log \left(\frac{c_{d_x+d_z} c_{d_y+d_z}}{c_{d_x+d_y+d_z}{c_{d_z}}} \right). \eeqa

\subsection{qCMI estimator}
Here, we adapt this estimator to calculate the qCMI for a given potential distribution $q_{X,Z}$. The major difference is the utilization of an importance sampling estimator to get the importance of each sample $i$ estimated as follows,
\beqa \omega_i :=\frac{q_{XZ}(x_i,z_i)}{\hat{f}_{XZ}(x_i,z_i)}. \eeqa

However, importance sampling based reweighting alone is insufficient to handle qCMI estimation, since there is a logarithm term which depends on the density also. We handle this effect by appropriately re-weighting the number of nearest neighbors for the $(y,z)$ and $z$ terms carefully using the importance sampling estimators. The estimation algorithm is described in detail in Algorithm~\ref{algo:qCMI}.
\vspace{.2in}

\begin{algorithm}[H]
\vspace{.1in}
\KwData{Data Samples $(x_i,y_i,z_i)$ for $i=1, \ldots, N$ and $q_{X,Z}$}
\KwResult{$\hat{qCMI}$ an estimate of $qCMI$}
\vspace{.1in}
\textbf{Step 1:} Calculate weights $\omega_i$ \\
\For{$i=1, \ldots, N$}{
  Estimate $\hat{f}_{XZ}(x_i,z_i)$ using a Kernel density estimator \cite{sheather1991reliable,devroye1984consistency}.\\
  $\omega_i :=\frac{q_{XZ}(x_i,z_i)}{\hat{f}_{XZ}(x_i,z_i)}$, the importance sampling estimate of sample $i$.}
\vspace{.1in}
\textbf{Step 2:} Calculate information samples $I_i$ \\
\For{$i=1, \ldots, N$}{
$\rho_{k,i}$ := Distance of $k$-th nearest neighbor of $(x_i,y_i,z_i)$. \\
$n_{xz,i} := \sum_{j\neq i: \| (x_i,z_i)-(x_j,z_j) \| < \rho_{k,i}} 1$, the number of neighbors of $(x_i,z_i)$ within distance $\rho_{k,i}$. \\
$n_{yz,i} :=  \sum_{j\neq i: \| (y_i,z_i)-(y_j,z_j) \| < \rho_{k,i} }^N \omega_j  $, the {\em weighted} number of  neighbors of $(y_i,z_i)$ within distance $\rho_{k,i}$.  \\
$n_{z,i} := \sum_{j\neq i: \| z_i-z_j \| < \rho_{k,i}} \omega_i$, the {\em weighted} number of neighbors of $z_i$ within distance $\rho_{k,i}$. \\
$I_i :=  \psi(k) - \log(n_{xz,i}) - \log(n_{yz,i}) + \log(n_{z,i})$. 
}
\vspace{.1in}
\textbf{Return} $\hat{qCMI} = \frac{1}{N}\sum_{i=1}^N \omega_i I_i + \log \left(\frac{c_{d_x+d_z} c_{d_y+d_z}}{c_{d_x+d_y+d_z}{c_{d_z}}} \right)$.
\vspace{.1in}

\caption{qCMI algorithm} \label{algo:qCMI}
\end{algorithm}

\section{Properties}
Our main technical result is the consistency of the proposed potential conditional mutual information estimator. This proof requires combining several elements from importance sampling, and accounting for the correlation induced by the coupling trick, in addition to handling the fact that the $k$ is fixed and hence introduces a bias into estimation.  

\begin{assumption} \label{assumptions}
We make the following assumptions.

\begin{enumerate}[label=\alph*)]
\item $\int f_{XYZ}(x,y,z) \left( \log f_{XYZ}(x,y,z) \right)^2 dx dy dz < \infty$. 
\item All the probability density functions (PDF) are absolutely integrable, i.e. for all $A,B \subset \{X,Y,Z\} $, $\int | f_{A|B}(a|b) |  da < \infty$ and $\int | f_{A|B}^q(a|b) |  da < \infty$.
\item There exists a finite constant $C$ such that the hessian matrices of $f_{XYZ}$ and $f^q_{XYZ}$ exist and it's true that $\max \{  \|h(f_{XYZ})\|_2, \| h(f^q_{XYZ}) \|_2 \} < C$ almost everywhere. 
\item  All the PDFs are upper-bounded, i.e. there exists a positive constant $C'$ such that for all $A,B \subset \{X,Y,Z\}$, $f_{A|B}<C'$ and $f^q_{A|B}<C'$ almost everywhere.
\item $f_{XZ}$ is upper and lower-bounded, i.e. there exist positive constants $C1$ and $C2$ such that $C_1 q_{XZ}(x,z)<f_{XZ}(x,z)< C_2 q_{XZ}(x,z)$ almost everywhere.
\item There bandwidth $h_N$ of kernel density estimator is chosen as $h_N=\frac{1}{2}N^{-1/(2d_x+2d_z+3)}$.
\item The k for the KNN estimator is chosen satisfying $k > \max \{\frac{d_z}{d_x+d_y}, \frac{d_x+d_y}{d_z},\frac{d_x+d_z}{d_y} \}$
\end{enumerate}
\end{assumption}

\begin{theorem}\label{theorem:ucmi_convergence}
Under the Assumption \ref{assumptions}, the qCMI estimator expressed in Algorithm~\ref{algo:qCMI} converges to the true value qCMI.
\begin{equation} \hat{qCMI} \overset{p}{\to}  qCMI \end{equation}
\end{theorem}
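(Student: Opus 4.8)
The plan is to establish convergence in probability via Chebyshev's inequality, by showing separately that the bias $\mathbb{E}[\hat{qCMI}] - qCMI \to 0$ and the variance $\mathrm{Var}[\hat{qCMI}] \to 0$. Write $\hat{qCMI} = \tfrac1N\sum_i \omega_i I_i + \kappa$ with $\kappa := \log\tfrac{c_{d_x+d_z}c_{d_y+d_z}}{c_{d_x+d_y+d_z}c_{d_z}}$, let $\omega(x,z) := q_{XZ}(x,z)/f_{XZ}(x,z)$, and let $f^q_{XYZ}(x,y,z) := \omega(x,z)\,f_{XYZ}(x,y,z) = q_{XZ}(x,z)\,f_{Y|X,Z}(y|x,z)$ be the potential joint density, with marginals $f^q_{XZ} = q_{XZ}$, $f^q_{YZ}$, $f^q_Z$. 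The target identity is $qCMI = \mathbb{E}_{f}\!\left[\omega(X,Z)\log\tfrac{f^q_{XYZ}(X,Y,Z)\,f^q_Z(Z)}{f^q_{XZ}(X,Z)\,f^q_{YZ}(Y,Z)}\right]$, obtained by importance sampling from the standard entropy decomposition of CMI. The neighbor counts are engineered so that $I_i + \kappa$ is a (fixed-$k$, hence only asymptotically unbiased) proxy for the $i$-th summand of this expectation: the $(X,Z)$-ball of radius $\rho_{k,i}$ contains, in expectation, $\approx N c_{d_x+d_z}\rho_{k,i}^{d_x+d_z} f_{XZ}(x_i,z_i)$ sample points, so $\log n_{xz,i} \approx \log\!\big(N c_{d_x+d_z}\rho_{k,i}^{d_x+d_z} f_{XZ}(x_i,z_i)\big)$; the $\omega_j$-weighted counts satisfy $\log n_{yz,i} \approx \log\!\big(N c_{d_y+d_z}\rho_{k,i}^{d_y+d_z} f^q_{YZ}(y_i,z_i)\big)$ and $\log n_{z,i} \approx \log\!\big(N c_{d_z}\rho_{k,i}^{d_z} f^q_Z(z_i)\big)$, because $\int \omega(x,z) f_{XYZ}(x,y,z)\,dx = f^q_{YZ}(y,z)$; and the Kozachenko--Leonenko correction for the joint gives $\psi(k) - \log\!\big(N c_{d_x+d_y+d_z}\rho_{k,i}^{d_x+d_y+d_z}\big) \approx \log f_{XYZ}(x_i,y_i,z_i) = \log f^q_{XYZ}(x_i,y_i,z_i) - \log\omega_i$. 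All powers of $\rho_{k,i}$ and all $\log N$ cancel, the residual $\log c_\bullet$ terms combine with $\kappa$, and $\log\omega_i = \log q_{XZ}(x_i,z_i) - \log f_{XZ}(x_i,z_i)$ cancels the stray $\log f_{XZ}(x_i,z_i)$ from $n_{xz,i}$, leaving $I_i + \kappa \approx \log\tfrac{f^q_{XYZ}f^q_Z}{f^q_{XZ}f^q_{YZ}}$ at $(x_i,y_i,z_i)$; averaging with weights $\omega_i$ then recovers $qCMI$.

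\textbf{Step 1: importance weights.} First dispose of the discrepancy between the plug-in weight $\omega_i = q_{XZ}(x_i,z_i)/\hat f_{XZ}(x_i,z_i)$ and the true $\omega(x_i,z_i)$. By Assumption~\ref{assumptions}(e) the true weight is bounded, $\omega(x,z) \in [1/C_2, 1/C_1]$; by Assumptions~\ref{assumptions}(c),(d),(f) and standard consistency of the kernel density estimator at the chosen bandwidth, $\sup_{x,z}|\hat f_{XZ}(x,z) - f_{XZ}(x,z)| \overset{p}{\to} 0$, so with probability tending to one $\hat f_{XZ}$ is bounded away from $0$ and $\infty$, the plug-in weights are uniformly bounded, and $\max_i |\omega_i - \omega(x_i,z_i)| \overset{p}{\to} 0$. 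In particular $\tfrac1N\sum_i \omega_i \overset{p}{\to} \int q_{XZ} = 1$, so replacing the single additive $\kappa$ by $\tfrac1N\sum_i \omega_i\kappa$ changes the estimate by $o_p(1)$; and everywhere below one may replace plug-in weights by the true bounded weights at the cost of an $o_p(1)$ term.

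\textbf{Step 2: fixed-$k$ bias.} This is the core of the proof and follows the template of \cite{gao2017demystifying,kozachenko1987sample}. Because $k$ is held fixed, $\rho_{k,i} \overset{p}{\to} 0$ but the local density estimates do not concentrate, so one cannot argue pointwise; instead compute $\mathbb{E}[\hat{qCMI}]$. Conditioning on $(x_i,y_i,z_i)$, the probability mass $P_i := \int_{B((x_i,y_i,z_i),\rho_{k,i})} f_{XYZ}$ is distributed as the $k$-th order statistic of $N-1$ uniforms, so $N P_i$ is asymptotically $\mathrm{Gamma}(k,1)$ and $\mathbb{E}[\log(N P_i)] \to \psi(k)$; a second-order Taylor expansion of $P_i$ using the bounded Hessian of Assumption~\ref{assumptions}(c) then yields $\mathbb{E}[\log f_{XYZ}(x_i,y_i,z_i)] = \psi(k) - \mathbb{E}[\log(N c_{d_x+d_y+d_z}\rho_{k,i}^{d_x+d_y+d_z})] + o(1)$. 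The same conditioning shows $n_{xz,i}$ and the $\omega_j$-weighted $n_{yz,i}, n_{z,i}$ are, given $\rho_{k,i}$, sums of $\approx N$ i.i.d.\ terms with means $\approx N\!\int_{B_{xz}} f_{XZ}$, $\approx N\!\int_{B_{yz}} f^q_{YZ}$, $\approx N\!\int_{B_z} f^q_Z$ (the last two because the weighted indicator integrates $\omega f_{XYZ} = f^q_{XYZ}$), whence the $\log$-count approximations above hold in the averaged sense. Assumption~\ref{assumptions}(g) is used precisely here: it makes $k$ large enough relative to the dimension ratios that negative moments $\mathbb{E}[\rho_{k,i}^{-d_z}]$, $\mathbb{E}[(\log n_{z,i})^2]$, etc., stay finite, so the $\log$'s of the (possibly small) counts are uniformly integrable and limits may be interchanged with expectations; Assumptions~\ref{assumptions}(a),(b),(d) control the tails of the remaining integrands. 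Summing the four pieces, every $\rho_{k,i}$-power and $\log N$ cancels as in the Strategy paragraph, and $\mathbb{E}[\hat{qCMI}] \to \mathbb{E}_f[\omega\log\tfrac{f^q_{XYZ}f^q_Z}{f^q_{XZ}f^q_{YZ}}] = qCMI$.

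\textbf{Step 3: variance, and the main obstacle.} For the variance, each summand $\omega_i(I_i + \kappa)$ is bounded in $L^2$ by the moment bounds of Step 2 (weights bounded by Step 1), and the only dependence among summands arises from the $k$-nearest-neighbor graph and the shared weights; the geometric fact that any point can be among the $k$ nearest neighbors of at most $c(d)\,k$ others shows each summand is correlated with only $O(k)$ others, so $\mathrm{Var}[\hat{qCMI}] = O(k/N) \to 0$ (an Efron--Stein bound gives the same). Chebyshev's inequality then yields $\hat{qCMI} \overset{p}{\to} qCMI$. The step I expect to be the main obstacle is Step 2: one must simultaneously cope with the non-concentration of fixed-$k$ local estimates, with the fact that $n_{yz,i}$ and $n_{z,i}$ are \emph{weighted} sums whose limiting means must be identified as the potential marginals $f^q_{YZ}, f^q_Z$ (after passing from plug-in to true weights), and with the coupling through the common radius $\rho_{k,i}$, which forces all four conditional-expectation estimates to be carried out jointly; establishing the uniform integrability of the $\log$-count quantities --- the precise role of Assumption~\ref{assumptions}(g) --- is the technical crux.
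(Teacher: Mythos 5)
Your overall route --- compute the bias of $\hat{qCMI}$ via Kozachenko--Leonenko order-statistic arguments, bound the variance, and apply Chebyshev --- is genuinely different from the paper's. The paper never computes $\E[\hat{qCMI}]$ or its variance. Instead it (i) replaces the plug-in weights $\omega_i$ by oracle weights $\omega_i'=q_{XZ}/f_{XZ}$ via a uniform kernel-density bound (your Step 1 matches this), (ii) splits the oracle estimator into four entropy-type terms, (iii) delegates the fixed-$k$ joint term $\hat h^q_N(X,Y,Z)$ to a known lemma (Lemma 2 of \cite{gao2016conditional} applied with $\tilde X=(X,Z)$), and (iv) proves convergence of the three marginal terms by a union bound over $i$ plus Bernstein concentration of the (weighted) counts at the random radius $\rho_{k,i}$, exploiting that these counts grow like a positive power of $N$ and therefore concentrate with no $\psi(k)$ correction. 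Assumption (g) enters exactly there: it makes the bad event that $\rho_{k,i}$ is too small have probability $o(1/N)$ so that the union bound over the $N$ samples closes --- not, as you suggest, to furnish uniform integrability of negative moments in a bias computation.

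The genuine gap is in your Step 3. The claim that each summand is correlated with only $O(k)$ others because a point can be among the $k$ nearest neighbors of at most $c(d)\,k$ others is false for this estimator: the summand $\omega_i I_i$ depends on the counts $n_{xz,i}$, $n_{yz,i}$, $n_{z,i}$, which involve every sample falling within distance $\rho_{k,i}$ of sample $i$ in a lower-dimensional subspace --- typically on the order of $N^{1-\delta}$ points, not $k$ --- and in addition the weighted counts $n_{yz,i}$, $n_{z,i}$ share the weights $\omega_j$, each of which depends on the entire sample through the kernel density estimate $\hat f_{XZ}$. So the local-dependence covariance bound does not apply and the $O(k/N)$ variance claim is unsupported; an Efron--Stein analysis is possible in principle but is itself a substantial undertaking (this is essentially the content of \cite{gao2017demystifying} for the unweighted KSG estimator, and the importance weights add a further layer of global dependence). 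The paper's structure --- high-probability concentration for the marginal terms, a cited fixed-$k$ lemma for the joint term --- is designed precisely to avoid ever needing a variance bound. To complete your plan you would have to either carry out the Efron--Stein bookkeeping for the weighted, globally coupled counts, or switch to the paper's in-probability decomposition for the three marginal terms.
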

\bpf 
Please see Section~\ref{sec:proof} for the proof.
\epf

\section{Simulation study}
In this section, we describe some simulated experiments we did to test the qCMI algorithm. The reader should notice that all the tests we have done are taking $q_{XZ}$ as $u_{XZ}$, i.e. all the tests are done for the special case of uCMI. So by exploiting the qCMI notations we mean uCMI everywhere.

\subsection{qCMI consistency}

The first numerical experiment we do is to test the consistency of our qCMI estimation algorithm. We set up a system of three variables $X$, $Y$ and $Z$. The variables $X$ and $Z$ are independent taken from $u^{n}(0,1)$ distribution, i.e. $X$ and $Z$ are taken form a uniform distribution and then raised to a power of $n$. When $n=1$, the variables $X$ and $Z$ are already uniform. When $n$ is large, the $u^{n}(0,1)$ distribution skews more towards $0$. For simplicity we apply identical $n$ for both $X$ and $Z$ here. Then $Y$ is generated as $Y = (X + Z + W) \mod 1$ in which the noise term $W$ is sampled from $u(0,0.2)$. From elementary information theory calculation, we can deduce that $I(X;Y|Z)=\log\left(  \frac{1}{.2} \right)=1.609$ if $n=1$. Thus $I^q(X;Y|Z)=1.609$ for all $n$. We plot the estimated value against the ground truth.

\begin{figure}
\begin{subfigure}[t]{.33\textwidth}
\centering
	\includegraphics[scale=.4]{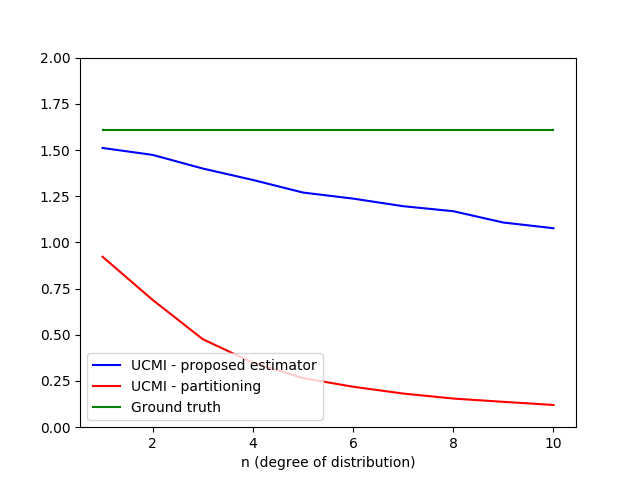}
	\caption{}\label{fig:ucmi_vs_degree1}
\end{subfigure}   
\begin{subfigure}[t]{.33\textwidth}
\centering
	\includegraphics[scale=.4]{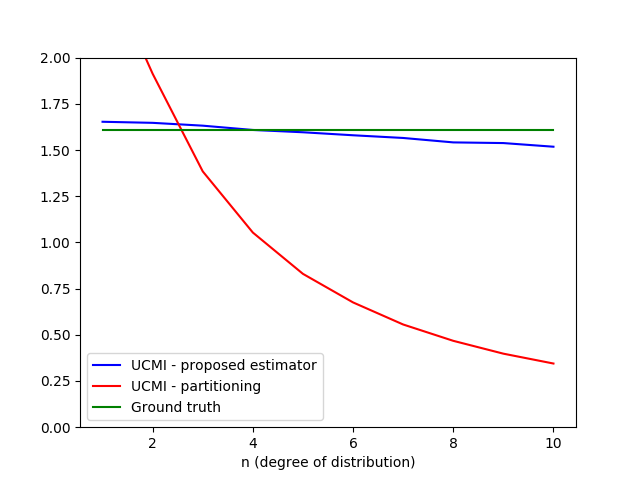}
	\caption{}\label{fig:ucmi_vs_degree2}
\end{subfigure}   
\begin{subfigure}[t]{.33\textwidth}
\centering
\includegraphics[scale=.4]{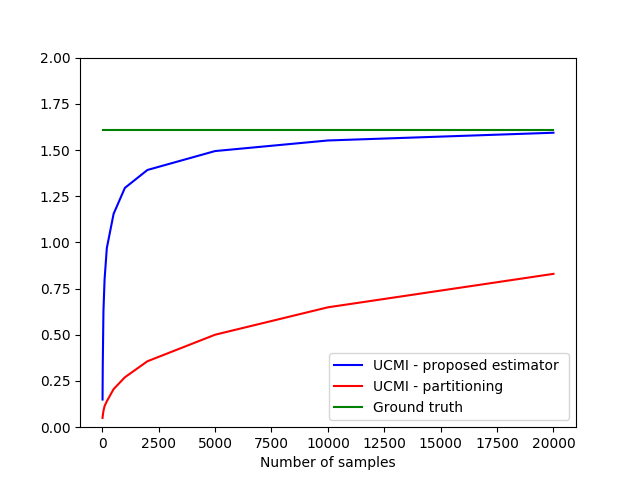}
	\caption{}\label{fig:ucmi_vs_samples}
\end{subfigure}   

\caption{The qCMI values calculated for $u^{n}(0,1)$ distributions for $X$ and $Z$, and uniform noise: (a) N=1000 samples and (b) N =20000 samples. (c) Degree $n=5$.}
\label{fig:ucmi_vs_degree}
\end{figure}

As the first part of the experiment, we keep the number of samples constant at $1000$ and $20000$, and change the degree $n$ from $1$ to $10$. We compare the results of our KSG-based method with the simple partitioning method, and the theoretical value of qCMI. For the partitioning method, the number of partitions at each dimension is determined by $\sqrt[3]{100N}$, so that we observe on average $100$ samples  inside each quantization bin.

The results are shown in Figure~\ref{fig:ucmi_vs_degree1} and Figure~\ref{fig:ucmi_vs_degree2}. Our expectation is that qCMI remains constant as $n$ (degree of distribution) changes. We see that with relatively high number of samples, the accuracy of proposed qCMI is satisfactorily high.   

As the second part of the experiment, we do the same experiment as the first part, but this time we keep $n=5$ and change the number of samples. The result is shown in Figure  \ref{fig:ucmi_vs_samples}. We can see convergence of KSG-based qCMI estimator to the true value and how it outperforms the partitioning-based qCMI method.  


As the third part of the experiment, we repeated the process for the first part, but replaced the $u^{n}(0,1)$ distributions with $\beta(1.5,1.5)$ and the noise distribution with $N(0,\sigma^2)$ and repeated the experiment for $\sigma=0.3,1.0$. For this part we kept the number of partitions at $25$ for each dimension. The results of calculated qCMI values are shown in Figure~\ref{fig:beta0.3} and Figure~\ref{fig:beta1.0}. 

\subsection{Dealing with discrete components}

As we discussed before, the qCMI algorithm replaces the observed distribution $f_{XZ}$ distribution with a distribution $q_{XZ}$. This property comes in handy when we want to remove the bias caused by repeated samples. For example, as discussed earlier, suppose that we want to measure the mutual information of two coupled variables in a dynamical system evolving through time. Such systems usually start from an initial state, go through a transient state and eventually reach a steady state. If one takes samples of the system's state at a constant rate to study the interaction of two variables, they might end up taking too many samples from the initial and steady states while the transient phase which usually happens in a relatively short time might be more informative. The conditional mutual information is not able to deal with this undesirable bias caused by the initial and steady states, while qCMI inherently deals with the effect by compensating for the samples which are less likely to happen.

To better observe the effect, we repeat the first experiment of the previous section, but this time we generate $1000$ samples from the scenario, and then add zeros to the $X$, $Y$ and $Z$ to create a high probability of occurrence at $(0,0,0)$. The proof of consistency of the estimator holds only when there is a joint density, i.e., the joint measure is absolutely continuous with respect to the Lebesgue measure, and hence does not directly apply to this case. We refer the reader to \cite{gao2017estimating} for an analysis of a similar coupled KNN estimator for mutual information in the discrete-continuous mixture case. 

Changing the number of zero points added from $0$ to $20000$, we apply the conditional MI and qCMI to the data generated and compare the results. As we can see in figure \ref{fig:ucmi_vs_zeros}, with the number of zeros increasing, the value of conditional MI falls down to zero, unable to capture the inter-dependence of $X$ and $Y$ given $Z$, while qCMI value remains unchanged, properly discovering the inter-dependence from the transient values.

%


\begin{figure}
\begin{subfigure}[t]{.33\textwidth}
\centering
	\includegraphics[scale=.4]{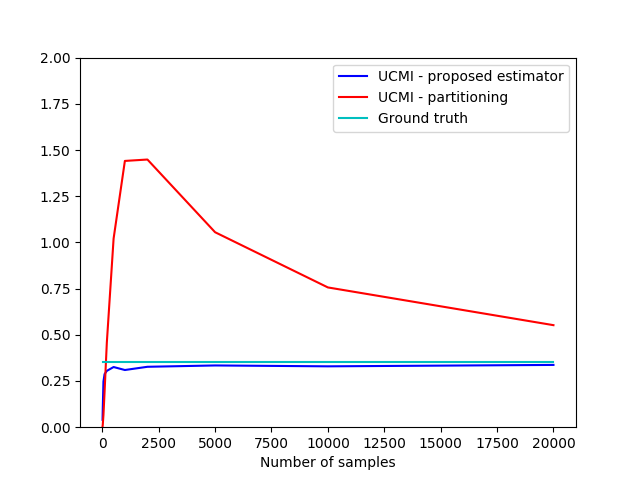}
	\caption{} \label{fig:beta0.3}
\end{subfigure}   
\begin{subfigure}[t]{.33\textwidth}
\centering
	\includegraphics[scale=.4]{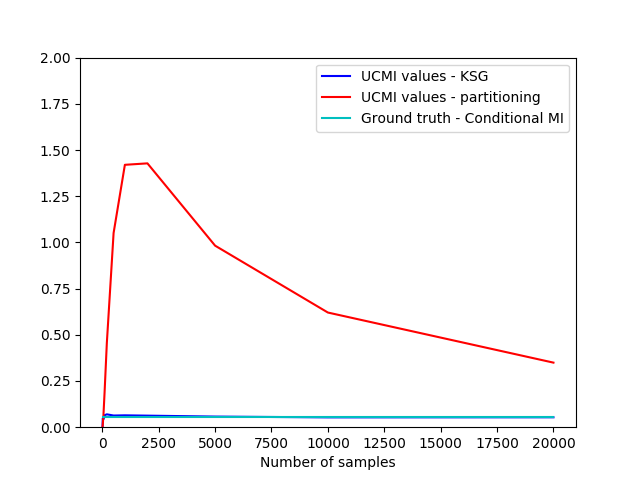}
	\caption{} \label{fig:beta1.0}
\end{subfigure}
\begin{subfigure}[t]{.33\textwidth}
\centering
\includegraphics[scale=.4]{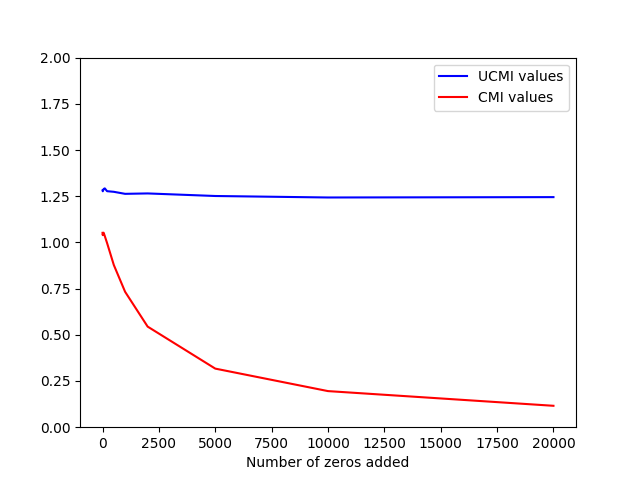}
	\caption{}\label{fig:ucmi_vs_zeros}
\end{subfigure}

\caption{The qCMI values calculated for a system with beta distribution for $X$ and $Z$ and Gaussian additive noise: (a) $\sigma=0.3$, and (b) $\sigma=1.0$. (c) The qCMI and CMI values versus the number of zeros added.}
\label{fig:ucmi_vs_samples_betagaussian}
\end{figure}

\subsection{Non-linear Neuron Cells' Development Process}  \label{sec:nonlinear}

In this section, we apply the RDI and uRDI algorithms to neuron cells' development process simulated based on a model from \cite{qiu2012understanding} which can be modeled as a dynamical system. A dynamical System is described as a set of variables shown by a vector of $\underline{x}$ which evolve through time starting from an initial state $\underline{x}(0)$. The evolution can be described as a vector function $\underline{g}(.)$ such that $\underline{x}(t)=\underline{g}\left( \underline{x}(t-1) \right)$. Note that $\underline{g}$ can be a stochastic function in general, i.e. it may include random coefficients, additive noise and so on.

The dynamical system here describes the evolution of 13 genes through the development process. The non-linear equations governing the development process approximate a continuous development process, in which $\dot{\underline{x}}(t)=g(\underline{x}(t-1))$. In other words, $\underline{x}(t)=\underline{x}(t-1)+dt.g \left( \underline{x} (t-1) \right) + \underline{n}(t)$ in which $ \underline{n}$ are independent Gaussian noises $\sim N(0,\sigma^2)$. 

For this system, we want to infer the true network of causal inferences. In a dynamical system, we say $x_i$ \textit{causes} $x_j$ if $x_j(t)$ is a function of $x_i(t-1)$. For this purpose, we first apply the RDI algorithm \cite{rahimzamani2016network} to extract the pairwise directed causality between the variables by calculating $I\left( x_i(t-1), x_j(t) | x_j(t-1) \right)$. Then we apply the \text{uRDI} algorithm, in which the conditional mutual information $I(X;Y|Z)$ in RDI is replaced with qCMI as $I^q(X;Y|Z)$ using $q_{X,Z}$ as a uniform distribution.

This system is a good example of a system in which the genes undergo a rather short transient state compared to the initial and steady states, and hence we expect an improvement in the performance of causal inference by applying uRDI (see Figure~\ref{fig:genes} for an example run of the system). The details of the dynamical system are given in \cite{qiu2012understanding}.

We simulated the system for discretization $dt=0.1$ and $\sigma = .001$, and changed the number of steps until which the system continues developing, and then applied the RDI and uRDI algorithms to evaluate the performance of each of the algorithms in terms of the area-under-the-ROC-Curve (AUC). The results are shown in Figure \ref{fig:auc_vs_steps_nonlinear}. As we can see, with the number of steps increasing implying the number of samples captured in the steady state are increased, the uRDI algorithm outperforms RDI. In another test scenario, we fixed the number of steps at $200$, but concatenated several runs of the same process. The results and the improvement of performance by uRDI can be seen in the Figure \ref{fig:auc_vs_numberofruns_nonlinear}.

\subsection{Decaying Linear Dynamical System}

In this section, we simulate a linear decaying dynamical system.  A dynamical system in the simple case of a deterministic linear system can be described as: 
\begin{equation}\label{eq:linear_system} \underline{x}(t)= A \underline{x}(t-1) \end{equation} 
In which $A$ is a square matrix.

Here we simulate a system of $13$ variables, all of them initialized from a $u(0.5,2)$ distribution. The first $6$ variables $(x_1, \ldots, x_6)$ are evolved through a linear deterministic process as in \eqref{eq:linear_system} in which $A$ is a square $6 \times 6$ matrix initialized as:
\begin{equation}
A =
\begin{bmatrix}
u(0.75,1.25) & 0 & 0 & 0 & u(0.75,1.25) & 0 \\
u(0.75,1.25) & u(0.75,1.25) & 0 & 0 & 0 & u(0.75,1.25) \\
0 & u(.75,1.25) & u(.75,1.25) & 0 & 0 & 0 \\
0 & u(.75,1.25) & 0 &  u(.75,1.25) & 0 & 0 \\
0 & 0 & u(.75,1.25) & u(.75,1.25) & u(.75,1.25) & 0 \\
u(.75,1.25) & u(.75,1.25) & 0 & 0 & 0 & u(.75,1.25)
\end{bmatrix}
\end{equation}
Then the matrix $A$ is divided by $5*\lambda_{\max}(A)$ in which $\lambda_{\max}(A)$ is the greatest eigenvalue of $A$. It's done to make sure that all the variables decay exponentially to $0$. After initialization, the matrix $A$ is kept constant throughout the development process, i.e. it doesn't change with time $t$.

The other 7 variables $(x_7, \ldots, x_{13})$ are random independent Gaussian variables.

In this experiment, we simulate the system described above for various numbers of time-steps, keeping the standard deviation of the Gaussian variables at $\sigma=0.1$, and applied both RDI and uRDI algorithms to infer the true causal inferences. Then we calculate the AUC values, the results are shown in Figure~\ref{fig:auc_vs_steps_linear}. As we can see, the uRDI algorithm outperforms RDI by a margin of $0.1$ in terms of AUC.

\begin{figure}
\begin{subfigure}[t]{.33\textwidth}
\centering
	\includegraphics[scale=.4]{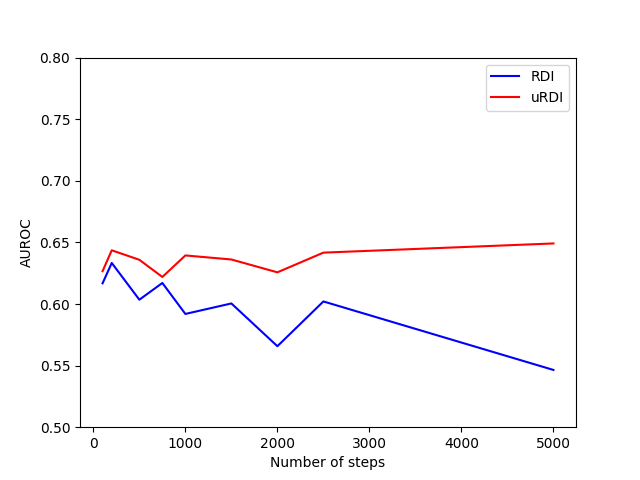}
	\caption{}
	\label{fig:auc_vs_steps_nonlinear}
\end{subfigure}   
\begin{subfigure}[t]{.33\textwidth}
\centering
	\includegraphics[scale=.4]{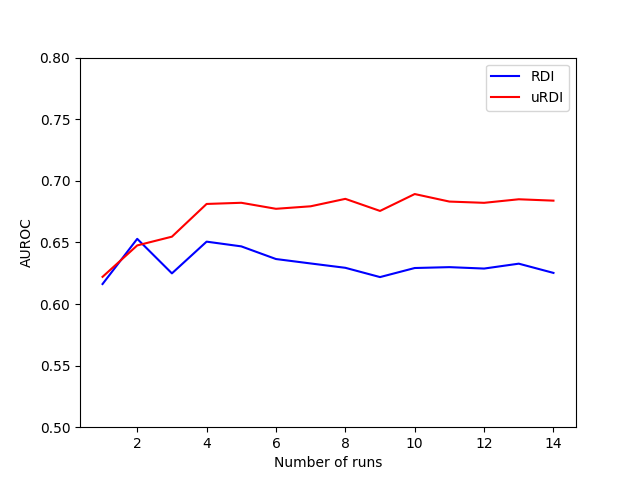}
	\caption{}
	\label{fig:auc_vs_numberofruns_nonlinear}
\end{subfigure}
\begin{subfigure}[t]{.33\textwidth}
\centering
\includegraphics[scale=.4]{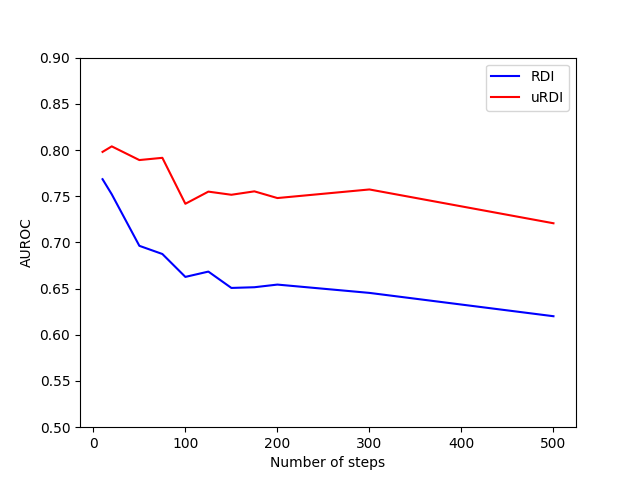}
\caption{}
\label{fig:auc_vs_steps_linear}
\end{subfigure} 
\caption{AUC values  for the neuron cells' development process: a) versus the number of steps. b) versus the number of runs. (c) for the decaying linear system}
\end{figure}

\section{ Future Directions}
In this section, we will describe some promising directions for further investigation.

\begin{enumerate}
\item {\em Quantifying causal strength}: As pointed out earlier, potential conditional mutual information can be used as a metric for quantifying causal strength when the graph is a simple three node network (shown in Figure~\ref{fig:network}). However, further work is needed in order to generalize the definition to deduce the causal strength of an edge or a set of edges in an arbitrary graph, akin to the formulation in \cite{janzing2013quantifying} and to study the relative advantages and disadvantages of such a formulation. 

\item {\em Discrete qCMI estimators}:  It has been shown in recent work that such estimators are not optimal even for determining mutual information in the discrete alphabet case \cite{valiant2011estimating, jiao2015minimax, wu2016minimax}. A very interesting question is how such minimax-rate optimal estimators can  be developed in the potential measures problem. 

\item {\em maxCMI estimation}: While we have developed efficient estimators for qCMI, in maxCMI, there is a further maximization over potential distributions $q$, which leads to some interesting interactions between estimation and optimization. Recent work has studied estimation of Shannon capacity on continuous alphabets, however, the formulation is not convex leading to possible local minima \cite{gao2016causal}. Further work is needed in order to find provably optimal estimators for maxCMI in the continuous case.  

\item {\em Other conditional measures}: Recent work \cite{kim2017discovering} has used strong data processing constants as a way for quantifying dependence between two variables, with relationships to information bottleneck. These measures depend {\em partially} on the factual measure $p_X$, and are implicitly regularized. One direction of future work is to develop multi-variable versions of such estimators to estimate the strength of conditional independence, for example. 

\item {\em Multivariable measures}: Develop estimators that can handle more general multi-variable information measures including total correlation \cite{watanabe1960information} and multi-variate mutual information \cite{chan2015multivariate}. 

\item {\em Ensemble estimation}: Another approach exploiting k-nearest-neighbors for mutual information is the so-called ensemble estimation approach, where estimators for different $k$ are combined together to get a stronger estimator, with fast convergence \cite{moon2017ensemble}. An interesting direction of research is to obtain ensemble estimators for potential measures. 

\end{enumerate}

\section{Acknowledgement}
The authors would like to thank Xiaojie Qiu and Cole Trapnell for discussions that triggered this work. This work was supported in part by NSF Career award (grant 1651236) and NIH award number R01HG008164.

\appendix
\section{Proof} \label{sec:proof}
Here we'll try to introduce a proof for qCMI algorithm we devised. As we know, the conditional mutual information is defined as,
\begin{equation} I(X;Y|Z) = \int f_{XYZ}(x,y,z) \log \left( \frac{f_{Y|XZ}(y|x,z)}{f_{Y|Z}(y|z)} \right) dx dy dz.\end{equation}

The qCMI is defined as the mutual information of $X$ and $Y$ given $Z$ when the joint distribution of $X$ and $Z$ is replaced by a joint uniform distribution $q_{XZ}(x,z)$,
\begin{equation} I^q(X;Y|Z) = \int f_{Y|XZ}(y|x,z) q_{XZ}(x,z) \log \left( \frac{f_{Y|XZ}(y|x,z)}{f_{Y|Z}^q(y|z)} \right) dx dy dz. \end{equation}

In which:
\begin{eqnarray} 
f_{Y|Z}^q(y|z) &=& \frac{f_{YZ}^q(y,z)}{f_Z^q(z)}, \\
f_{YZ}^q(y,z) &=& \int f_{XYZ}^q(x,y,z) dx, \\
f_{Z}^q(z) &=& \int f_{XYZ}^q(x,y,z) dx dy.
\end{eqnarray}

from now on, the superscript $U$ over each distribution function implies that the actual $f_{XZ}(x,z)$ is replaced by $q_{XZ}(x,z)$.  Equivalently, qCMI can be written as,
\begin{equation} I^q(X;Y|Z) = -h^q(X,Y,Z) + h^q(X,Z) + h^q(X,Y) - h^q(Z), \end{equation}
where,
\begin{eqnarray}
 h^q(X,Y,Z) & \equiv &  -\int f_{XYZ}^q(x,y,z) \log f_{XYZ}(x,y,z)  dx dy dz. \\
 h^q(X,Z) & \equiv &  -\int f_{XYZ}^q(x,y,z) \log f_{XZ}(x,z) dx dy dz. \\
 h^q(Y,Z) & \equiv &  -\int f_{XYZ}^q(x,y,z) \log f_{YZ}^q(y,z) dx dy dz. \\
 h^q(Z) & \equiv &  -\int f_{XYZ}^q(x,y,z) \log f_Z^q(z) dx dy dz.
\end{eqnarray}
 
Note that $-h^q(X,Y,Z) + h^q(X,Z)=-h^q(Y|X,Z) =  \int f_{XYZ}^q(x,y,z) \log \left( f_{Y|XZ}(y|x,z) \right) dx dy dz $. So the term inside the logarithm is independent of the distribution over $(X,Z)$ and hence $\log f_{XYZ}(x,y,z)$ and $\log f_{XZ}(x,z)$ appear when defining $h^q(X,Y,Z)$  and $ h^q(X,Z)$.

Here we introduce a qCMI estimator, based on the KSG estimator for \textit{UMI}. Remember the KSG-type estimator for the conditional MI,
\begin{equation}
\hat{I}(X;Y|Z) = \frac{1}{N} \sum_{i=1}^N \left( \psi(k)-\log(n_{xz,i})-\log(n_{yz,i})+\log(n_{z,i}) \right) +  C(d_x,d_y,d_z).
\end{equation}
where 
\beqa C(d_x,d_y,d_z) :=  \log \left(\frac{c_{d_x+d_z} c_{d_y+d_z}}{c_{d_x+d_y+d_z}{c_{d_z}}} \right). \eeqa

The estimator for the qCMI is as below,
\begin{equation}\label{eq:qCMI}
\hat{I}^q(X;Y|Z) = \frac{1}{N} \sum_{i=1}^N \omega_i g_i(x_i,y_i,z_i),
\end{equation}
where,
\begin{eqnarray} 
\omega_i & \equiv & \frac{q_{XZ}(x_i,z_i)}{\hat{f}_{XZ}(x_i,z_i)}. \\
g_i(x_i,y_i,z_i) &\equiv & \psi(k)-\log(n_{xz})-\log(\sum_{\| (y_i-y_j,z_i-z_j) \|<\rho_{k,i}} \omega_j)+\log(\sum_{\| z_i-z_j \|<\rho_{k,i}} \omega_j) + C(d_x,d_y,d_z).
\end{eqnarray}

\section {KSG estimator for qCMI: Proof of convergence}

 Similar to the \cite{gao2016conditional} we define,
\begin{eqnarray}
 \omega_i' & \equiv & \frac{q_{XZ}(x_i,z_i)}{f_{XZ}(x_i,z_i)}. \\
 n_{yz,i}' & \equiv & \sum_{j \in N_{yz}^{\epsilon(i)}} \omega_j' .\\
 n_{z,i}' & \equiv & \sum_{j \in N_{z}^{\epsilon(i)}} \omega_j' .\\
g'(x_i,y_i,z_i) & \equiv &  \psi(k)-\log(n_{xz})-\log( n_{yz,i}')+\log( n_{z,i}') + C(d_x,d_y,d_z).
\end{eqnarray}

From the triangle inequality, we can write,
\begin{eqnarray}
 & & | \hat{I}^q(X;Y|Z) -  I^q(X;Y|Z) | \\
& \leq & | \hat{I}^q(X;Y|Z) - \frac{1}{N} \sum_{i=1}^N \omega_i' g'(x_i,y_i,z_i) | \label{eq:ucmi_first_line} \\
& + & | \frac{1}{N} \sum_{i=1}^N \omega_i' g'(x_i,y_i,z_i) - I^q(X;Y|Z) |. \label{eq:ucmi_second_line}
\end{eqnarray}

To show the convergence of the KSG estimator for qCMI, we will show that \eqref{eq:ucmi_first_line} and \eqref{eq:ucmi_second_line} both converge to zero. The Lemma \ref{lemma:first_term_convergence} proves that \eqref{eq:ucmi_first_line} converges to zero. The lemma will be proven through the Section \ref{sec:first_term_convergence}.
\begin{lemma} \label{lemma:first_term_convergence}
The term \eqref{eq:ucmi_first_line} converges to $0$ as $N \longrightarrow \infty$ in probability.  
\end{lemma}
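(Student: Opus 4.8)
The term in \eqref{eq:ucmi_first_line} differs from zero only through the importance weights: $\hat{I}^q$ uses $\omega_i = q_{XZ}(x_i,z_i)/\hat f_{XZ}(x_i,z_i)$ built from the kernel density estimate, whereas the primed expression uses $\omega_i' = q_{XZ}(x_i,z_i)/f_{XZ}(x_i,z_i)$ built from the true density, while the $k$-NN radii $\rho_{k,i}$, the associated neighbour sets, and the unweighted counts $n_{xz,i}$ are identical in both and cancel. So the plan is to reduce everything to the accuracy of the kernel density estimator. First I would set $\eta_N := \sup_{(x,z)} | \hat f_{XZ}(x,z) - f_{XZ}(x,z) |$ and invoke standard uniform-consistency results for kernel density estimators \cite{devroye1984consistency,sheather1991reliable}: under parts (c) and (d) of Assumption~\ref{assumptions} the pointwise bias is $O(h_N^2)$ and the variance is $O(1/(N h_N^{d_x+d_z}))$, so $\eta_N = O_p\bigl(h_N^2 + \sqrt{\log N/(N h_N^{d_x+d_z})}\bigr)$, and with the bandwidth prescribed in part (f) this is $O_p(N^{-\gamma})$ for some $\gamma>0$; in particular $\eta_N \overset{p}{\to} 0$ and $\eta_N \log N \overset{p}{\to} 0$.

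Next I would condition on the high-probability event $\mathcal E_N := \{\eta_N \le \tfrac12 \inf f_{XZ}\}$, which has probability tending to $1$. By part (e) of Assumption~\ref{assumptions} together with the boundedness of the support (which makes $q_{XZ}$, and hence $f_{XZ}$, bounded above and below by positive constants on the region of interest) one has $\omega_i' \in [1/C_2, 1/C_1]$, and on $\mathcal E_N$ also $\hat f_{XZ} \ge \tfrac12\inf f_{XZ} > 0$ at every data point, whence $\max_i |\omega_i - \omega_i'| \le C\eta_N$ and $\max_i \omega_i \le C$. This propagates to the weighted neighbour counts: since $(x_i,y_i,z_i)$ has at least $k-1$ samples strictly within distance $\rho_{k,i}$ and coordinate projections do not increase distances, the neighbour sets defining $n_{yz,i}$ and $n_{z,i}$ each contain at least $k-1$ members, so $n_{yz,i}'$ and $n_{z,i}'$ are bounded below by $(k-1)/C_2$, while $|n_{yz,i} - n_{yz,i}'|$ and $|n_{z,i} - n_{z,i}'|$ are at most the number of neighbours times $C\eta_N$; the neighbour count cancels in the ratio, so $n_{yz,i}/n_{yz,i}'$ and $n_{z,i}/n_{z,i}'$ lie within $C'\eta_N$ of $1$, giving $|g_i - g_i'| \le C''\eta_N$. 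Finally the crude deterministic bounds $k-1 \le n_{xz,i} \le N$ and $(k-1)/C_2 \le n_{yz,i}, n_{z,i} \le CN$ give $|g_i| \le C\log N$ on $\mathcal E_N$, so writing the difference in \eqref{eq:ucmi_first_line} as $\tfrac1N\sum_i (\omega_i - \omega_i') g_i + \tfrac1N\sum_i \omega_i'(g_i - g_i')$ bounds it by $C\eta_N\log N + C\eta_N$; since this tends to $0$ in probability and $\Prob(\mathcal E_N) \to 1$, the lemma follows.

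The step I expect to be the main obstacle is guaranteeing that $\hat f_{XZ}$ never comes close to zero at a data point, since otherwise the weight $\omega_i$ blows up and no amount of averaging can rescue the bound; this is exactly where the lower bound implicit in part (e) of Assumption~\ref{assumptions} (through the bounded support, which keeps $q_{XZ}$ bounded below) and the bandwidth schedule of part (f) (which forces the kernel estimate to concentrate around the truth) are both essential. It is also the reason one wants a genuine polynomial rate for $\eta_N$ rather than mere consistency: that rate has to absorb the $\log N$ factor coming from the worst-case magnitude of the counts $n_{xz,i}, n_{yz,i}, n_{z,i}$ appearing in $g_i$.
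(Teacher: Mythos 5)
Your proposal is correct and follows essentially the same route as the paper: both isolate the difference in the weights $\omega_i-\omega_i'$, observe that $\rho_{k,i}$ and $n_{xz,i}$ cancel so that $|g_i-g_i'|$ reduces to log-ratios of the weighted counts (controlled because the raw neighbour count cancels against the $\ge k-1$ lower bound), bound $|g'|$ by $O(\log N)$, and reduce the lemma to showing $\max_i|\omega_i-\omega_i'|\cdot\log N\overset{p}{\to}0$. The only difference is presentational: the paper outsources that last step to Lemma~1 of \cite{gao2016conditional}, whereas you fill it in directly with the standard uniform bias--variance bound for the kernel density estimator under Assumption~\ref{assumptions}(e)--(f), which is exactly the argument being cited.
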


 For \eqref{eq:ucmi_second_line} we will first write the left hand side term as four entropy terms and show the convergence of each of the terms to their respecttive true entropy term. This will be done in the Lemmas \ref{lemma:hxyz_convergence} and \ref{lemma:3_terms_convergence}. So we can write,
\begin{equation}\frac{1}{N} \sum_{i=1}^N \omega_i' g'(x_i,y_i,z_i) = \hat{h}^q(Y,Z) + \hat{h}^q(X,Z) - \hat{h}^q(X,Y,Z) - \hat{h}^q(Z)  
- \sum_{i=1}^N \frac{\omega'_i}{N} \left( \log(N-1) + \psi(N) \right), \end{equation}
where,
\begin{eqnarray}
 \hat{h}_N^q(X,Y,Z) & \equiv & \frac{1}{N} \sum_{i=1}^N \omega_i' \left( -\psi(k) + \psi(N) + \log c_{d_x+d_y+d_z} + (d_x+d_y+d_z) \log \rho_{k,i} \right). \label{eq:h_xyz} \\
 \hat{h}_N^q(X,Z) & \equiv & \frac{1}{N} \sum_{i=1}^N \omega_i' \left( -\log(n_{xz,i}) + \log(N-1) + \log c_{d_x+d_z} + (d_x+d_z) \log \rho_{k,i} \right). \\
 \hat{h}_N^q(Y,Z) & \equiv & \frac{1}{N} \sum_{i=1}^N \omega_i'  \left( -\log( n_{yz,i}') + \log(N-1) + \log c_{d_y+d_z} + (d_y+d_z) \log \rho_{k,i} \right). \\
 \hat{h}_N^q(Z) & \equiv & \frac{1}{N} \sum_{i=1}^N \omega_i' \left( -\log( n_{z,i}') + \log(N-1) + \log c_{d_z} + d_z \log \rho_{k,i} \right).
\end{eqnarray}

The term $ \sum_{i=1}^N \frac{\omega'_i}{N} \left( \log(N-1) + \psi(N) \right) $ converges to $0$ as $N \longrightarrow \infty$. We will prove the convergence of the rest of terms in the following lemmas, proving the Theorem \ref{theorem:ucmi_convergence}.
 
\begin{lemma} \label{lemma:hxyz_convergence}
Under the Assumption \ref{assumptions}, $\hat{h}_N^q(X,Y,Z) \overset{p}{\to} h^q(X,Y,Z)$ as $N \longrightarrow \infty$.
\end{lemma}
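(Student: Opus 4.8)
The plan is to recognize $\hat{h}_N^q(X,Y,Z)$ as an \emph{importance-weighted Kozachenko--Leonenko (KL) entropy estimator} and to run the classical bias/variance argument for the KL estimator with the weights $\omega_i'$ carried through. Write $\hat{h}_N^q(X,Y,Z)=\frac{1}{N}\sum_{i=1}^N \omega_i'\,\xi_i$ with $\xi_i:=-\psi(k)+\psi(N)+\log c_{d}+d\log\rho_{k,i}$ and $d:=d_x+d_y+d_z$, and split
\[ \hat{h}_N^q(X,Y,Z)-h^q(X,Y,Z)=\big(\hat{h}_N^q(X,Y,Z)-\mathbb{E}[\hat{h}_N^q(X,Y,Z)]\big)+\big(\mathbb{E}[\hat{h}_N^q(X,Y,Z)]-h^q(X,Y,Z)\big). \]
The first observation, used everywhere below, is that part~(e) of Assumption~\ref{assumptions} forces the oracle weights to be uniformly bounded, $1/C_2\le\omega_i'\le 1/C_1$, so all moment bookkeeping reduces to moment bounds on $\xi_i$ alone; the gap between the estimated weights $\omega_i$ and the oracle weights $\omega_i'$ is not our concern here, being handled separately in Lemma~\ref{lemma:first_term_convergence}.

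For the bias term, by i.i.d.\ symmetry $\mathbb{E}[\hat{h}_N^q(X,Y,Z)]=\mathbb{E}[\omega_1'\xi_1]$. Condition on $(x_1,y_1,z_1)$, which fixes $\omega_1'=q_{XZ}(x_1,z_1)/f_{XZ}(x_1,z_1)$. The probability-integral-transform argument gives that the $f_{XYZ}$-measure $P_{k,1}$ of the ball $B((x_1,y_1,z_1),\rho_{k,1})$ is exactly the $k$-th order statistic of $N-1$ i.i.d.\ uniforms, i.e.\ $\mathrm{Beta}(k,N-k)$, so $\mathbb{E}[\log P_{k,1}\mid(x_1,y_1,z_1)]=\psi(k)-\psi(N)$. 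A second-order Taylor expansion of $f_{XYZ}$ around $(x_1,y_1,z_1)$, whose first-order term vanishes by symmetry of the ball and whose remainder is controlled by the Hessian bound in part~(c) and the density bounds in parts~(d)--(e) (boundary balls being handled by the $O(\rho_{k,1})$ measure of the boundary layer, which vanishes since $\rho_{k,1}\to 0$), yields $\log P_{k,1}=\log f_{XYZ}(x_1,y_1,z_1)+\log c_d+d\log\rho_{k,1}+O(\rho_{k,1}^2)$. Combining gives $\mathbb{E}[\xi_1\mid(x_1,y_1,z_1)]=-\log f_{XYZ}(x_1,y_1,z_1)+o(1)$, where a dominating bound from part~(a) together with $\rho_{k,1}\to 0$ a.s.\ lets us integrate against $\omega_1' f_{XYZ}$ by dominated convergence. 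Since $\omega_1' f_{XYZ}(x_1,y_1,z_1)=f^q_{XYZ}(x_1,y_1,z_1)$, this gives $\mathbb{E}[\omega_1'\xi_1]\to-\int f^q_{XYZ}\log f_{XYZ}\,dx\,dy\,dz=h^q(X,Y,Z)$.

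For the variance term I would show $\mathrm{Var}(\hat{h}_N^q(X,Y,Z))=\frac{1}{N^2}\sum_{i,j}\mathrm{Cov}(\omega_i'\xi_i,\omega_j'\xi_j)\to 0$ and conclude by Chebyshev. The $N$ diagonal terms contribute $O(1/N)$ once $\mathbb{E}[\xi_i^2]$ is bounded, i.e.\ $\mathbb{E}[(\log\rho_{k,i})^2]=O(1)$, which again follows from part~(a) and the density bounds via the order-statistic representation. The off-diagonal terms are the main obstacle: $\rho_{k,i}$ and $\rho_{k,j}$ are dependent since they come from the same sample. I would control them exactly as in the consistency analysis of the KSG/KL family in \cite{gao2017demystifying}: fix $i\neq j$, condition on the two coupled points and on a truncation event that both $k$-NN distances are small, use that the two $k$-NN balls are with overwhelming probability determined by disjoint subsets of the remaining $N-2$ points (hence nearly independent), and bound the residual covariance so that the off-diagonal sum is $o(N^2)$; uniform boundedness of the weights keeps these covariances of the same order as in the unweighted case. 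The delicate points to get right are the uniformity of the $o(1)$ remainders across the support — boundary balls and low-density regions — and the covariance truncation; everything else is routine KL bookkeeping.
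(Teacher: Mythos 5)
Your proposal is correct in substance, but it takes a much longer road than the paper does. The paper's entire proof of this lemma is a one-line reduction: group the variables as $\tilde{X}\equiv(X,Z)$ and invoke Lemma~2 of \cite{gao2016conditional}, which already establishes consistency of exactly this importance-weighted Kozachenko--Leonenko estimator in the two-block case; no new analysis is performed. What you have written is, in effect, a sketch of the proof of that cited lemma: the $\mathrm{Beta}(k,N-k)$ order-statistic representation of the ball mass giving $\mathbb{E}[\log P_{k,1}\mid t_1]=\psi(k)-\psi(N)$, the second-order Taylor correction controlled by the Hessian bound in Assumption~\ref{assumptions}(c), the identity $\omega_1' f_{XYZ}=f^q_{XYZ}$ turning the limit into $-\int f^q_{XYZ}\log f_{XYZ}$ (which correctly matches the paper's definition of $h^q(X,Y,Z)$), and a Chebyshev argument with covariance truncation in the style of \cite{gao2017demystifying} for the variance. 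Your observation that Assumption~\ref{assumptions}(e) makes the oracle weights uniformly bounded, so that all moment bounds reduce to the unweighted case, is exactly the point that makes the citation legitimate, and your deferral of the $\omega_i$ versus $\omega_i'$ gap to Lemma~\ref{lemma:first_term_convergence} matches the paper's decomposition. The trade-off is clear: the paper's route is short and leans entirely on prior work, while yours is self-contained but leaves the hardest step --- the off-diagonal covariance control and the uniformity of the $o(1)$ remainders near the boundary --- at the level of a plan rather than a proof; if you intend the argument to stand without the citation, that step needs to be carried out in full.
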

\begin{proof}
It directly follows from the Lemma 2 from \cite{gao2016conditional}. We just need to let $\tilde{X} \equiv (X,Z)$ and then show that $\hat{h}_N^q(\tilde{X},Y) \longrightarrow h^q(\tilde{X},Y)$ directly using the Lemma 2 from \cite{gao2016conditional}.  
\end{proof}

\begin{lemma} \label{lemma:3_terms_convergence}
For $N \longrightarrow \infty$,
\begin{equation} \hat{h}_N^q(X,Z) + \hat{h}_N^q(Y,Z) - \hat{h}_N^q(Z) \overset{p}{\to} h^q(X,Z) + h^q(Y,Z) - h^q(Z).\end{equation}
\end{lemma}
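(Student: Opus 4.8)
The plan is to prove the stronger fact that each of $\hat h_N^q(X,Z)$, $\hat h_N^q(Y,Z)$, $\hat h_N^q(Z)$ converges in probability to its namesake functional; the displayed claim then follows by the triangle inequality. All three estimators are averages $\frac1N\sum_i\omega_i'(\cdot)$ of the same structural type, so the same argument handles all of them, and I take $\hat h_N^q(X,Z)$ as the representative case. (Note that in this lemma the counts $n_{yz,i}'$, $n_{z,i}'$ carry the \emph{true} weights $\omega_j'=q_{XZ}(x_j,z_j)/f_{XZ}(x_j,z_j)$; the gap between this idealized estimator and the one actually run, which uses $\hat f_{XZ}$, is precisely what Lemma~\ref{lemma:first_term_convergence} absorbs.) Conditioning on the $i$-th sample $(x_i,y_i,z_i)$ and on the $k$-nearest-neighbor radius $\rho_{k,i}$, the count $n_{xz,i}=\sum_{j\neq i}\mathbb{1}\{\|(x_i,z_i)-(x_j,z_j)\|<\rho_{k,i}\}$ is, conditionally, essentially a sum of $N-1$ i.i.d.\ indicators with mean of order $(N-1)\,\mathbb{P}_{XZ}\!\big(B_{(x_i,z_i)}(\rho_{k,i})\big)$; likewise $n_{yz,i}'$ and $n_{z,i}'$ are weighted sums of i.i.d.\ \emph{bounded} (Assumption (e)) contributions with conditional means of order $(N-1)\,\mathbb{P}^q_{YZ}(B_{(y_i,z_i)}(\rho_{k,i}))$ and $(N-1)\,\mathbb{P}^q_{Z}(B_{z_i}(\rho_{k,i}))$, these probabilities being evaluated through the change-of-measure identity $\mathbb{E}_{f_{XYZ}}[\omega'\,\mathbb{1}(\cdot)]=\int f^q_{XYZ}\,\mathbb{1}(\cdot)$.

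The first real step is a concentration argument: replace $\log n_{xz,i}$ (and its weighted analogues) by the log of its conditional mean, with an error that is $o_p(1)$ after averaging against the bounded weights $\omega_i'$. Since $\rho_{k,i}\to 0$ almost surely, the relevant ball probabilities are of order $\rho_{k,i}^{d_x+d_z}$, and Assumption (g) is exactly what forces these counts to grow like a strictly positive power of $N$; a Chebyshev/Bernstein bound conditional on $\rho_{k,i}$ — whose variance-to-mean ratio is controlled because the weights and the densities are bounded above (Assumptions (d),(e)) — then gives $n_{xz,i}/\mathbb{E}[n_{xz,i}\mid\rho_{k,i}]\to 1$, while the $(\log)^2$-integrability in Assumption (a) controls the residual contribution of the event that a count is atypically small and supplies the uniform integrability needed to pass from conditional to unconditional convergence of the average. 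Once the count is replaced by its mean, the bounded-Hessian Assumption (c) yields the Taylor expansion $\mathbb{P}_{XZ}(B_{(x_i,z_i)}(r))=f_{XZ}(x_i,z_i)\,c_{d_x+d_z}\,r^{d_x+d_z}\big(1+O(r^2)\big)$ away from the boundary (the boundary layer of the bounded support has probability tending to $0$ and is discarded), so that $-\log n_{xz,i}+\log(N-1)+\log c_{d_x+d_z}+(d_x+d_z)\log\rho_{k,i}\to-\log f_{XZ}(x_i,z_i)$ and, in the same way, the $(Y,Z)$- and $Z$-per-sample terms tend to $-\log f^q_{YZ}(y_i,z_i)$ and $-\log f^q_Z(z_i)$.

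What then remains is routine: $\frac1N\sum_i\omega_i'\big(-\log f_{XZ}(x_i,z_i)\big)$, $\frac1N\sum_i\omega_i'\big(-\log f^q_{YZ}(y_i,z_i)\big)$ and $\frac1N\sum_i\omega_i'\big(-\log f^q_Z(z_i)\big)$ are i.i.d.\ averages whose expectations, again by $\mathbb{E}_{f_{XYZ}}[\omega'\phi]=\int f^q_{XYZ}\phi$, equal precisely $h^q(X,Z)$, $h^q(Y,Z)$ and $h^q(Z)$; integrability of the summands follows from Assumptions (a),(b),(d),(e), and the weak law of large numbers finishes each piece. No special handling of the coupling (the same $\rho_{k,i}$ entering all three estimators) is needed, exactly because we establish convergence term by term. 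I expect the main obstacle to be the concentration step — bounding $\mathbb{E}\big|\log n_{xz,i}-\log\mathbb{E}[n_{xz,i}\mid\rho_{k,i}]\big|$ and its weighted versions well enough to survive the $\frac1N\sum$, in the face of $\log$ blowing up near $0$, of the mild dependence between $\rho_{k,i}$ and the points making up $n_{xz,i}$, and of the need to move between conditional and unconditional modes of convergence; this is where Assumptions (e) and (g) do the work, and the argument mirrors Lemma~3 of \cite{gao2016conditional}.
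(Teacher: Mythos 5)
Your proposal is correct and follows essentially the same route as the paper: the paper also splits the error into an oracle law-of-large-numbers term $\frac{1}{N}\sum_i \omega_i' a_i \to h^q(X,Z)+h^q(Y,Z)-h^q(Z)$ plus a per-sample density-estimation error, and controls the latter by conditioning on $T_i$ and $\rho_{k,i}$, bounding the events that $\rho_{k,i}$ is atypically large or small, and applying Bernstein's inequality to the (weighted, bounded by Assumption (e)) neighbor counts, with Assumption (g) supplying the exponent needed to beat the union bound over $i$. The only cosmetic difference is that you argue term by term while the paper bundles the three log-density errors into a single quantity $\hat{a}_i - a_i$ before splitting them back into the separate events $I_3$, $I_4$, $I_5$, which amounts to the same thing.
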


\section{Proof of lemma \ref{lemma:first_term_convergence} } \label{sec:first_term_convergence} 

Th proof is analog to that of Lemma 1 in \cite{gao2016conditional}. The term \eqref{eq:ucmi_first_line} is upper-bounded as,
\begin{eqnarray}
& & | \hat{I}^q(X;Y|Z) - \frac{1}{N} \sum_{i=1}^N \omega_i' g'(x_i,y_i,z_i) | \\
& = & | \frac{1}{N} \sum_{i=1}^N \left( \omega_i g(x_i,y_i,z_i) -\omega_i' g'(x_i,y_i,z_i) \right) | \\
& \leq & \frac{1}{N} \sum_{i=1}^N | \omega_i g(x_i,y_i,z_i) - \omega_i' g'(x_i,y_i,z_i) | \\
& \leq & \frac{1}{N} \sum_{i=1}^N \left( | \omega_i -  \omega'_i | |g'(x_i,y_i,z_i)| - \omega_i |g(x_i,y_i,z_i) - g'(x_i,y_i,z_i) | \right) \\
& = & \frac{1}{N} \sum_{i=1}^N \left( | \omega_i -  \omega'_i | |g'(x_i,y_i,z_i)| + \omega_i | \log(n_{yz,i}) - \log(n'_{yz,i}) | + \omega_i | \log(n_{z,i}) - \log(n'_{z,i}) | \right) \\
& \leq & \frac{1}{N} \sum_{i=1}^N \left( | \omega_i -  \omega'_i | |g'(x_i,y_i,z_i)| + \omega_i | n_{yz,i} - n'_{yz,i} |(\frac{1}{2n_{yz,i}}+\frac{1}{2n'_{yz,i}}) 
+ \omega_i | n_{z,i} - n'_{z,i} |(\frac{1}{2n_{z,i}}+\frac{1}{2n'_{z,i}}) \right) \\
& \leq & \frac{1}{N} \sum_{i=1}^N | \omega_i -  \omega'_i | |g'(x_i,y_i,z_i)| + \sum_{i=1}^N \frac{\omega_i}{N} \left( \frac{\max_{1\leq j\leq N}|\omega'_i-\omega_i|}{\min_{1\leq j\leq N}\omega'_i}+ \frac{\max_{1\leq j\leq N}|\omega'_i-\omega_i|}{\min_{1\leq j\leq N}\omega_i}  \right) \\
& \leq & \max_{1\leq i\leq N} | \omega_i -  \omega'_i | \left(  \max_{1\leq i\leq N} |g'(x_i,y_i,z_i)| + \frac{1}{\min_{1\leq j\leq N}\omega'_i}+ \frac{1}{\min_{1\leq j\leq N}\omega_i}  \right).
\end{eqnarray}

The term $g'(x_i,y_i,z_i)$ can be easily lower-bounded and upper-bounded as,
\begin{equation} -2 \log N \leq g'(x_i,y_i,z_i) \leq 2 \log N. \end{equation}

Thus, for any $\epsilon > 0$ and sufficiently large $N$ such that $\log N > \max \{ C_2 \epsilon / 3, 3 C_2\}$ and, if $| \omega_i -  \omega'_i | < \epsilon/(3 \log N)$ for all $i$ , we have,
\begin{eqnarray}
& & \max_{1\leq i\leq N} | \omega_i -  \omega'_i | \left(  \max_{1\leq i\leq N} |g'(x_i,y_i,z_i)| + \frac{1}{\min_{1\leq j\leq N}\omega'_i}+ \frac{1}{\min_{1\leq j\leq N}\omega_i}  \right) \\
& \leq & \frac{\epsilon}{3 \log N} \left( 2 \log N + C_2 + \frac{1}{1/C_2-\frac{\epsilon}{3 \log N}}  \right) \\
& \leq & \frac{\epsilon}{3 \log N} \left( 2 \log N + C_2 + 2C_2  \right) \leq \epsilon.
\end{eqnarray}

So for any $\epsilon > 0$ and sufficiently large $N$:
\begin{equation} P\left( \frac{1}{N} \sum_{i=1}^N | \omega_i g(x_i,y_i,z_i) - \omega_i' g'(x_i,y_i,z_i) | > \epsilon \right) \leq P \left( \max_{1\leq i\leq N} | \omega_i -  \omega'_i | > \frac{\epsilon}{3 \log N} \right). \end{equation}

Following the proof of Lemma 1 in \cite{gao2016conditional}, the term $P \left( \max_{1\leq i\leq N} | \omega_i -  \omega'_i | > \frac{\epsilon}{3 \log N} \right)$ converges to zero as $N \longrightarrow \infty$. So the desired convergence for the term \eqref{eq:ucmi_first_line} is obtained.  

\section{Proof of lemma \ref{lemma:3_terms_convergence} }

If we define,
\begin{eqnarray}
\hat{f}_{XZ}(x_i,z_i) & \equiv & \frac{n_{xz,i}}{(N-1)c_{d_x+d_z}\rho_{k,i}^{d_x+d_z}}, \\
\hat{f}^q_{YZ}(y_i,z_i) & \equiv & \frac{n'_{yz,i}}{(N-1)c_{d_y+d_z}\rho_{k,i}^{d_y+d_z}}, \\
\hat{f}^q_Z(z_i) & \equiv & \frac{n'_{z,i}}{(N-1)c_{d_z}\rho_{k,i}^{d_z}}, \\
\hat{a}_i & \equiv & \log \hat{f}_{XZ}(x_i,z_i) + \log \hat{f}_{YZ}^q(y_i,z_i) - \log \hat{f}_{Z}^q(z_i), \\
a_i & \equiv & \log f_{XZ}(x_i,z_i) + \log f_{YZ}^q(y_i,z_i) - \log f_{Z}^q(z_i).
\end{eqnarray}

Then,
\begin{eqnarray}
\hat{h}_N^q(X,Z) + \hat{h}_N^q(Y,Z) - \hat{h}_N^q(Z) & = &
 - \sum_{i=1}^N \frac{\omega'_i}{N} \left( \log \hat{f}_{XZ}(x_i,z_i) + \log \hat{f}_{YZ}^q(y_i,z_i) - \log \hat{f}_{Z}^q(z_i) \right) \\
& = & -\sum_{i=1}^N \frac{\omega'_i}{N} \hat{a}_i.
\end{eqnarray}

Now we can write,
\begin{eqnarray}
 & & | \hat{h}_N^q(X,Z) + \hat{h}_N^q(Y,Z) - \hat{h}_N^q(Z) - \left( h^q(X,Z) + h^q(Y,Z) - h^q(Z) \right)   | \\
 & \leq & 
| h^q(X,Z) + h^q(Y,Z) - h^q(Z) - \sum_{i=1}^N -\frac{\omega'_i}{N} a_i | \label{eq:lemma_3terms_1st_term} \\
&+& \sum_{i=1}^N \frac{\omega'_i}{N} | a_i  - \hat{a}_i |. \label{eq:lemma_3terms_2nd_term}
\end{eqnarray}

For the term \eqref{eq:lemma_3terms_1st_term}, since the terms $ a_i=\omega'_i \left( \log f_{XZ}(x_i,z_i) + \log f^q_{YZ}(y_i,z_i) + \log f^q_Z(z_i) \right)$ are i.i.d random variables, given the Assumption \ref{assumptions}, by the strong law of large numbers, we can write,
\begin{eqnarray}
\sum_{i=1}^N -\frac{\omega'_i}{N} a_i &=& \sum_{i=1}^N -\frac{\omega'_i}{N} \left( \log f_{XZ}(x,z) + \log f^q_{YZ}(y,z) - \log f^q_{Z}(z) \right) \\
& \longrightarrow & E \left[ -\frac{q_{XZ}(x,z)}{f_{XZ}(x,z)} \left( \log f_{XZ}(x,z) + \log f^q_{YZ}(y,z) - \log f^q_{Z}(z) \right) \right] \\
& = & -\int f_{Y|XZ}(y|x,z) q_{XZ}(x,z) \left( \log f_{XZ}(x,z) + \log f^q_{YZ}(y,z) - \log f^q_{Z}(z) \right) dx dy dz \\
& = &  h^q(X,Z) + h^q(Y,Z) - h^q(Z).
\end{eqnarray}   

Therefore, the term \eqref{eq:lemma_3terms_1st_term} converges to $0$ almost surely. For the term \eqref{eq:lemma_3terms_2nd_term}, let $T_i=(X_i,Y_i,Z_i)$, thus $t=(x,y,z)$, and $f_T(t)=f_{XYZ}(x,y,z)$. For any fixed $\epsilon>0$, we can write,
\begin{eqnarray} 
& & P \left( \sum_{i=1}^N \frac{\omega'_i}{N} | a_i  - \hat{a}_i | > \epsilon \right) \\
& \leq & P \left( \bigcup_{i=1}^N \{ | a_i  - \hat{a}_i | > \epsilon/2 \} \right) + P \left( \sum_{i=1}^N \omega'_i>2N \right).
\end{eqnarray}

The second term converges to zero. The first term can be bounded as,
\begin{eqnarray} 
& & P \left( \bigcup_{i=1}^N \{ | a_i  - \hat{a}_i | > \epsilon/2 \} \right) \\
& \leq & N.P \left( | a_i  - \hat{a}_i | > \epsilon/2  \right)  \leq N \int P \left( | a_i  - \hat{a}_i | > \epsilon/2 | T_i=t  \right) f_T(t) dt.
\end{eqnarray}

The term $P \left( | a_i  - \hat{a}_i | > \epsilon/2 | T_i=t  \right)$ can be upper-bounded by $I_1(t)+I_2(t)+I_3(t)+I_4(t)+I_5(t)$, where,
\begin{eqnarray} 
I_1(t) & = & P \left( \rho_{k,i} > r_1 | T_i=t \right). \\
I_2(t) & = & P \left( \rho_{k,i} < r_2 | T_i=t \right). \\
I_3(t) & = & \int_{r=r_2}^{r_1} P( | \log f_{XZ}(x_i,z_i) - \log \hat{f}_{XZ}(x_i,z_i) | > \epsilon/6 | \rho_{k,i}=r, T_i=t ) f_{\rho}(r) dr. \\
I_4(t) & = & \int_{r=r_2}^{r_1} P( | \log f_{U(YZ)}(y_i,z_i) - \log \hat{f}_{U(YZ)}(y_i,z_i) | > \epsilon/6 | \rho_{k,i}=r, T_i=t ) f_{\rho}(r) dr. \\
I_5(t) & = & \int_{r=r_2}^{r_1} P( | \log f_{U(Z)}(z_i) - \log \hat{f}_{U(Z)}(z_i) | > \epsilon/6 | \rho_{k,i}=r, T_i=t ) f_{\rho}(r) dr.
\end{eqnarray}

In which,
\begin{eqnarray}
r_1 & \equiv & \log N ( N f_T(t) c_{d_x+d_y+d_z} )^{ \frac{-1}{d_x+d_y+d_z} } \\
r_2 & \equiv & \max \{ (\log N)^2 ( N f_{XZ}(x,z) c_{d_x+d_z} )^{ \frac{-1}{d_x+d_z} }, (\log N)^2 ( N f^q(y,z) c_{d_y+d_z} )^{ \frac{-1}{d_y+d_z} }, (\log N)^2 ( N f_Z^q(z) c_{d_z} )^{ \frac{-1}{d_z} } \}.
\end{eqnarray}

The terms $I_1(t)$ and $I_2(t)$ represent the probability that the value of $\rho_{k.i}$ is too large or two small. We will show that both probabilities converge to $0$, i.e. $\rho_{i,k}$ obtained lies within a reasonable range. The $r_1$ threshold is determined based on the fact that $\frac{k}{N} \approx f_T(t_i) c_{d_x+d_y+d_z} \rho_{k,i}^{d_x+d_y+d_z} $ and selecting $k = \log N$ is a reasonable choice. The $r_2$ threshold, on the other hand, implies that $\rho_{i,k}$ should lie in a range such that $\frac{k_s}{N} \approx f_S(s_i) c_{d_s} \rho_{k,i}^{d_s}$ for all subspaces $s \in \{ (x,z),(y,z),z \}$.   

The terms $I_3(t)$, $I_4(t)$ and $I_5(t)$ represent the estimation error probability for a reasonable $\rho_{k,i}$.   

Considering each of the terms separately, We will prove that each term will go to zero as $N$ goes to infinity.

\subsubsection{Convergence of $I_1$}
The term $I_1(t)$ can be upper-bounded in the same way as explained in \cite{gao2016conditional} for $I_1(z)$. Then we will have,
\begin{equation}
I_1(t) \leq kN^{k-1} \exp \{ -\frac{(\log N)^{d_x+d_y+d_z}}{4} \}.
\end{equation}

\subsubsection{Convergence of $I_2$}

For the convergence of $I_2(t)$, we are goona take the same steps as \cite{gao2016conditional} for $I_2(z)$. First let $B_T(t,r) \equiv {n: \| n - t \| < r }$ be the $(d_x+d_y+d_z)$-dimensional ball centered at $z$ with radius $r$. For $r_{2,1} \equiv (\log N)^2 ( N f_{XZ}(x,z) c_{d_x+d_z} )^{ \frac{-1}{d_x+d_z} }$ and for sufficiently large $N$, the probability mass within the $B_T(t,r_{2,1})$ is given by,
\begin{eqnarray}
p_{2,1} & \equiv & P \left( u \in B_T(t, (\log N)^2 ( N f_{XZ}(x,z) c_{d_x+d_z} )^{ \frac{-1}{d_x+d_z} } \right)  \nonumber \\
& \leq & f_T(t) c_{d_x+d_y+d_z} \left( (\log N)^2 ( N f_{XZ}(x,z) c_{d_x+d_z} )^{ \frac{-1}{d_x+d_z} } \right)^{d_x+d_y+d_z} \left(1+ C (\log N)^4 ( N f_{XZ}(x,z) c_{d_x+d_z} )^{ \frac{-1}{d_x+d_z} })^2 \right) \nonumber \\
& \leq & \frac{2f_T(t) c_{d_x+d_y+d_z}}{(f_{XZ}(x,z) c_{d_x+d_z})^{\frac{d_x+d_y+d_z}{d_x+d_z}}} (\log N)^{2(d_x+d_y+d_z)} N^{-\frac{d_x+d_y+d_z}{d_x+d_z}} \\
& \leq & 2f_{Y|XZ}(y|x,z) \frac{c_{d_x+d_y+d_z}}{c_{d_x+d_z}} (\log N)^{2(d_x+d_y+d_z)} N^{-\frac{d_x+d_y+d_z}{d_x+d_z}} \\
& \leq & 2C' \frac{c_{d_x+d_y+d_z}}{c_{d_x+d_z}} (\log N)^{2(d_x+d_y+d_z)} N^{-\frac{d_x+d_y+d_z}{d_x+d_z}}.
\end{eqnarray}

Where the last inequalty comes from the assumption that $f_{Y|XZ}(y|x,z)<C'$ . Similarly, for the second threshold $r_{2,2}= (\log N)^2 ( N f^q(y,z) c_{d_y+d_z} )^{ \frac{-1}{d_y+d_z} }$,
\begin{eqnarray}
p_{2,2} & \equiv & P \left( u \in B_T(t, (\log N)^2 ( N f_{YZ}^q(y,z) c_{d_y+d_z} )^{ \frac{-1}{d_y+d_z} } \right)  \nonumber  \\
& \leq & f_T(t) c_{d_x+d_y+d_z} \left( (\log N)^2 ( N f_{YZ}^q(y,z) c_{d_y+d_z} )^{ \frac{-1}{d_y+d_z} } \right)^{d_x+d_y+d_z} \left(1+ C (\log N)^4 ( N f_{YZ}^q(y,z) c_{d_y+d_z} )^{ \frac{-1}{d_y+d_z} })^2 \right)   \nonumber \\
& \leq & \frac{2f_T(t) c_{d_x+d_y+d_z}}{(f_{YZ}^q(y,z) c_{d_y+d_z})^{\frac{d_x+d_y+d_z}{d_y+d_z}}} (\log N)^{2(d_x+d_y+d_z)} N^{-\frac{d_x+d_y+d_z}{d_y+d_z}} \\
& \leq & 2\frac{f_T(t)}{f_{YZ}^q(y,z)} \frac{c_{d_x+d_y+d_z}}{c_{d_y+d_z}} (\log N)^{2(d_x+d_y+d_z)} N^{-\frac{d_x+d_y+d_z}{d_y+d_z}} \\
& \leq & 2C_2\frac{f_T^q(t)}{f_{YZ}^q(y,z)} \frac{c_{d_x+d_y+d_z}}{c_{d_x+d_z}} (\log N)^{2(d_x+d_y+d_z)} N^{-\frac{d_x+d_y+d_z}{d_y+d_z}} \\
& \leq & 2C_2C' \frac{c_{d_x+d_y+d_z}}{c_{d_y+d_z}} (\log N)^{2(d_x+d_y+d_z)} N^{-\frac{d_x+d_y+d_z}{d_y+d_z}}.
\end{eqnarray}

The last two inequalities come from the bounds assumed on the distribution functions. Similarly, for the second threshold $r_{2,2}= (\log N)^2 ( N f_Z^q(z) c_{d_z} )^{ \frac{-1}{d_z}}$ we can write:
\begin{eqnarray}
p_{2,3} & \equiv & P \left( u \in B_T(t, (\log N)^2 ( N f_Z^q(z) c_{d_z} )^{ \frac{-1}{d_z} } \right) \\
& \leq & f_T(t) c_{d_x+d_y+d_z} \left( (\log N)^2 ( N f_Z^q(z) c_{d_z} )^{ \frac{-1}{d_z} } \right)^{d_x+d_y+d_z} \left(1+ C (\log N)^4 ( N f_Z^q(z) c_{d_z} )^{ \frac{-1}{d_z} })^2 \right) \\
& \leq & \frac{2f_T(t) c_{d_x+d_y+d_z}}{(f_Z^q(z) c_{d_z})^{\frac{d_x+d_y+d_z}{d_z}}} (\log N)^{2(d_x+d_y+d_z)} N^{-\frac{d_x+d_y+d_z}{d_z}} \\
& \leq & 2\frac{f_T(t)}{f_Z^q(z)} \frac{c_{d_x+d_y+d_z}}{c_{d_z}} (\log N)^{2(d_x+d_y+d_z)} N^{-\frac{d_x+d_y+d_z}{d_z}} \\
& \leq & 2C_2\frac{f_T^q(t)}{f_Z^q(z)} \frac{c_{d_x+d_y+d_z}}{c_{d_z}} (\log N)^{2(d_x+d_y+d_z)} N^{-\frac{d_x+d_y+d_z}{d_z}} \\
& \leq & 2C_2C' \frac{c_{d_x+d_y+d_z}}{c_{d_z}} (\log N)^{2(d_x+d_y+d_z)} N^{-\frac{d_x+d_y+d_z}{d_z}}
\end{eqnarray}
The last two inequalities come from the bounds assumed on the distribution functions.

Similar to the procedure for $I_2(z)$ in the \cite{gao2016conditional}, $I_2(t)$ is the probability that at least $k$ samples lie in $B_T(t,\max \{ r_{2,1}, r_{2,2}, r_{2,3} \} )$. Then we have,
\begin{eqnarray}
I_2(t) & = & P \left( \rho_{k,i} < \max \{ r_{2,1}, r_{2,2}, r_{2,3} \} \right) \\
& = & \sum_{m=k}^{N-1} \left( \begin{tabular}{c} N-1 \\ m \end{tabular} \right) \max \{ p_{2,1}, p_{2,2}, p_{2,3} \}^m \left( 1-\max \{ p_{2,1}, p_{2,2}, p_{2,3} \} \right)^{N-1-m} \\
& \leq & \sum_{m=k}^{N-1} N^m \max \{ p_{2,1}, p_{2,2}, p_{2,3} \}^m \\
& \leq & \sum_{m=k}^{N-1} \left( 2CC' \frac{c_{d_x+d_y+d_z}}{\min\{c_{d_y+d_z},c_{d_x+d_z},c_{d_z} \}} ( \log N )^{2(d_x+d_y+d_z)} N^{-\min\{ \frac{d_x+d_y}{d_z}, \frac{d_z}{d_x+d_y}, \frac{d_y}{d_x+d_z} \} } \right)^m. \\
\end{eqnarray}

Similarly, for $N$ sufficently large and applying the sum of geometric series, we have,
\begin{equation}
I_2(t) \leq \left( 4CC' \frac{c_{d_x+d_y+d_z}}{\min\{c_{d_y+d_z},c_{d_x+d_z},c_{d_z} \}}  \right)^k ( \log N )^{2k(d_x+d_y+d_z)} N^{-k\min\{ \frac{d_x+d_y}{d_z}, \frac{d_z}{d_x+d_y}, \frac{d_y}{d_x+d_z} \} }.
\end{equation}  

\subsubsection{Convergence of $I_3$}

Given $T_i = t = (x,y,z)$, and $\rho_{k,i}=r$ and $\hat{f}_{XZ}(x_i,z_i) = \frac{n_{xz,i}}{(N-1)c_{d_x+d_z}\rho_{k,i}^{d_x+d_z}}$, we have,
\begin{eqnarray}
& & P\left( | \log f_{XZ}(X_i,Z_i) - \log \hat{f}_{XZ}(X_i,Z_i) | > \epsilon/6 | \rho_{k,i}=r, T_i=t \right) \\
&=& P\left( n_{xz,i} > (N-1)c_{d_x+d_z}r^{d_x+d_z} f_{XZ}(x,z)e^{\epsilon/6} | \rho_{k,i}=r, T_i=t \right) \\
&+& P\left( n_{xz,i} < (N-1)c_{d_x+d_z}r^{d_x+d_z} f_{XZ}(x,z)e^{\epsilon/6} | \rho_{k,i}=r, T_i=t \right).
\end{eqnarray}

Given $T_i=t$, Lemma \ref{lemma:dist_nxz} gives the probability distribution of the $n_{xz,i}$. 

\begin{lemma} \label{lemma:dist_nxz}
Given $T_i = t = (x,y,z)$, and $\rho_{k,i}=r<r_N$ for some deterministic sequance of $r_N$ such that $\lim_{N \longleftarrow \infty} r_N = 0$ and for any $\epsilon > 0$, the number of neighbors $n_{xz,i}-k$ is distributed as $\sum_{l=k+1}^{N-1} U_l$, where $U_l$ are i.i.d Bernoulli random variables with mean $f_{XZ}(x,z)c_{d_x+d_z}r^{d_x+d_z} (1-\epsilon/8) \leq E [ U_l ] \leq f_{XZ}(x,z)c_{d_x+d_z}r^{d_x+d_z} (1-\epsilon/8) $ for sufficiently large $N$. 
\end{lemma}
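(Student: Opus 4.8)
The plan is to strip off the $k$ nearest neighbours, which the coupling forces to be counted, and then show that the remaining points contribute an independent Bernoulli sum. First condition on $T_i=t$, so that the other $N-1$ samples $T_j=(X_j,Y_j,Z_j)$ are i.i.d.\ with density $f_{XYZ}$. Let $B_T(t,r)$ denote the open ball of radius $r$ in $\mathbb{R}^{d_x+d_y+d_z}$ and $B_{XZ}((x,z),r)$ the open ball of radius $r$ in the $(X,Z)$-coordinates. The coupling step uses the fact that dropping coordinates does not increase distances, $\|(x_i,z_i)-(x_j,z_j)\|\le\|(x_i,y_i,z_i)-(x_j,y_j,z_j)\|$, so any sample inside $B_T(t,r)$ also lies inside $B_{XZ}((x,z),r)$. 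Conditioning on $\rho_{k,i}=r$ (the usual disintegration for KNN arguments), exactly the $k$ nearest neighbours of $T_i$ lie in the closed ball of radius $r$ in the full space, and a.s.\ the $Y$-coordinate of the $k$-th one differs from $y_i$, so its projected $(X,Z)$-distance is strictly below $r$; thus all $k$ are among those counted in $n_{xz,i}$. Consequently $n_{xz,i}-k$ equals the number of the remaining $N-1-k$ samples --- those at full distance strictly greater than $r$ --- that nonetheless fall into $B_{XZ}((x,z),r)$.

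By exchangeability, conditioned on $T_i=t$ and $\rho_{k,i}=r$ these $N-1-k$ ``outer'' samples are i.i.d.\ with the law $f_{XYZ}$ restricted to the complement of $B_T(t,r)$ and renormalised. Relabel them and let $U_l$, $l=k+1,\dots,N-1$, be the indicator that the $l$-th outer sample lies in $B_{XZ}((x,z),r)$; these are i.i.d.\ Bernoulli and $n_{xz,i}-k=\sum_{l=k+1}^{N-1}U_l$, which is exactly the claimed form.

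It remains to estimate $p=\mathbb{E}[U_l]$. By the restriction-and-renormalisation,
\[ p=\frac{\mathbb{P}\big((X_l,Z_l)\in B_{XZ}((x,z),r)\big)-\mathbb{P}\big((X_l,Z_l)\in B_{XZ}((x,z),r),\,T_l\in B_T(t,r)\big)}{1-\mathbb{P}\big(T_l\in B_T(t,r)\big)}. \]
A second-order Taylor expansion of $f_{XZ}$ about $(x,z)$ together with the a.e.\ Hessian bound of Assumption~\ref{assumptions}(c) gives $\mathbb{P}((X_l,Z_l)\in B_{XZ}((x,z),r))=f_{XZ}(x,z)\,c_{d_x+d_z}\,r^{d_x+d_z}(1+O(r^2))$; the correction term in the numerator is at most $\mathbb{P}(T_l\in B_T(t,r))\le C'\,c_{d_x+d_y+d_z}\,r^{d_x+d_y+d_z}$ by Assumption~\ref{assumptions}(d), which is smaller by a factor $O(r^{d_y})$; and the denominator is $1-O(r^{d_x+d_y+d_z})$, which tends to $1$. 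Hence $p=f_{XZ}(x,z)\,c_{d_x+d_z}\,r^{d_x+d_z}(1+o(1))$ as $r\to0$, and since $r<r_N$ with $r_N\to0$, for all sufficiently large $N$ we get $f_{XZ}(x,z)c_{d_x+d_z}r^{d_x+d_z}(1-\epsilon/8)\le p\le f_{XZ}(x,z)c_{d_x+d_z}r^{d_x+d_z}(1+\epsilon/8)$, which is the asserted bound (the upper constant in the statement should read $1+\epsilon/8$).

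The main obstacle is making the density expansion uniform in the ball centre: the $O(r^2)$ Taylor remainder has to be dominated by one constant independent of $(x,z)$, which is exactly what the a.e.\ bound $\|h(f_{XYZ})\|_2<C$, and the bound it induces on the $(X,Z)$-marginal, provide; one also has to verify that conditioning on the full-space ball perturbs $p$ only at order $r^{d_y}$ relative to the main term --- i.e.\ that the ``$Y$-directions'' cannot conspire with the projection --- but this is again controlled by the upper bound on $f_{Y|XZ}$. A minor point is that $r=\rho_{k,i}$ is itself random, which is handled precisely by conditioning on $\rho_{k,i}=r<r_N$ with $r_N\to0$ as in the statement; a second minor point is the measure-zero ambiguity at ties in ``exactly $k$ samples in $B_T(t,r)$'', which is harmless since $f_{XYZ}$ is a density.
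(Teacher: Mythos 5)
Your argument is correct and supplies, in full, essentially the proof the paper outsources --- its entire proof of this lemma is the citation ``See the proof of Lemma 5 in \cite{gao2016conditional}'' --- namely: condition on $T_i=t$ and $\rho_{k,i}=r$, use the projection inequality $\|(x_i,z_i)-(x_j,z_j)\|\le\|(x_i,y_i,z_i)-(x_j,y_j,z_j)\|$ to force the $k$ nearest neighbours into the $(X,Z)$-count, observe that the remaining $N-1-k$ samples are i.i.d.\ from $f_{XYZ}$ restricted to the complement of $B_T(t,r)$ so that $n_{xz,i}-k$ is a Bernoulli sum, and estimate the success probability by a second-order Taylor expansion of $f_{XZ}$ controlled by the Hessian bound of Assumption~\ref{assumptions}(c), with the full-ball correction term and the renormalising denominator both negligible at order $r^{d_y}$ and $r^{d_x+d_y+d_z}$ respectively. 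You are also right that the statement as printed contains a typo: both bounds on $E[U_l]$ read $(1-\epsilon/8)$, and the upper one should be $(1+\epsilon/8)$, which is what your expansion yields and what the subsequent bounding of $I_3(t)$ actually requires.
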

\begin{proof}
See the proof of Lemma 5 in \cite{gao2016conditional}. 
\end{proof}

Based on Lemma \ref{lemma:dist_nxz},
\begin{eqnarray}
& & P\left( n_{xz,i} > (N-1)c_{d_x+d_z}r^{d_x+d_z} f_{XZ}(x,z)e^{\epsilon/6} | \rho_{k,i}=r, T_i=t \right) \\
&=& P\left( \sum_{l=k+1}^{N-1} U_l > (N-1)c_{d_x+d_z}r^{d_x+d_z} f_{XZ}(x,z)e^{\epsilon/6}-k \right) \\
&=& P\left( \sum_{l=k+1}^{N-1} U_l -(N-1-k)E[U_l] > (N-1)c_{d_x+d_z}r^{d_x+d_z} f_{XZ}(x,z)e^{\epsilon/6}-k-(N-1-k)E[U_l] \right).\label{eq:prob_I3}
\end{eqnarray}

The right hand side term inside the probability can be lower bounded as,
\begin{eqnarray}
& & (N-1)c_{d_x+d_z}r^{d_x+d_z} f_{XZ}(x,z)e^{\epsilon/6}-k-(N-1-k)E[U_l] \\
& \geq & (N-1)c_{d_x+d_z}r^{d_x+d_z} f_{XZ}(x,z)e^{\epsilon/6}-k-(N-1-k)f_{XZ}(x,z)c_{d_x+d_z}r^{d_x+d_z} (1-\epsilon/8) \\
& \geq & (N-k-1)c_{d_x+d_z}r^{d_x+d_z} f_{XZ}(x,z) \left( e^{\epsilon/6}-1-\epsilon/8 \right) -k \\
& \geq & (N-k-1)c_{d_x+d_z}r^{d_x+d_z}f_{XZ}(x,z)\frac{\epsilon}{48}.
\end{eqnarray}
 for sufficiently large $N$. 

Applying Bernstein's inequality, \eqref{eq:prob_I3} can be upper bounded by $\exp \{ -\frac{\epsilon^2}{2304(1+19\epsilon/144)}(N-k-1) c_{d_x+d_z} r^{d_x+d_z} f_{XZ}(x,z) \}$. The tail distribution can also be upper-bounded by the same term. Thus,
\begin{eqnarray}
 & & P( | \log f_{XZ}(x_i,z_i) - \log \hat{f}_{XZ}(x_i,z_i) | > \epsilon/6 | \rho_{k,i}=r, T_i=t ) \nonumber \\
&\leq& 2\exp \{ -\frac{\epsilon^2}{2304(1+19\epsilon/144)}(N-k-1) c_{d_x+d_z} r^{d_x+d_z} f_{XZ}(x,z) \}.
\end{eqnarray}

Therefore, the term $I_3(t)$ can be upper-bounded as,
\begin{eqnarray} 
I_3(t) & = & \int_{r=r_2}^{r_1} P( | \log f_{XZ}(x_i,z_i) - \log \hat{f}_{XZ}(x_i,z_i) | > \epsilon/6 | \rho_{k,i}=r, T_i=t ) f_{\rho}(r) dr \\
& \leq & \int_{r=(\log N)^2 ( N f_{XZ}(x,z) c_{d_x+d_z} )^{ \frac{-1}{d_x+d_z} }}^{\log N ( N f_T(t) c_{d_x+d_y+d_z} )^{ \frac{-1}{d_x+d_y+d_z} }} 
P( | \log f_{XZ}(x_i,z_i) - \log \hat{f}_{XZ}(x_i,z_i) | > \epsilon/6 | \rho_{k,i}=r, T_i=t ) f_{\rho}(r) dr  \nonumber \\
& \leq & \int_{r=(\log N)^2 ( N f_{XZ}(x,z) c_{d_x+d_z} )^{ \frac{-1}{d_x+d_z} }}^{\log N ( N f_T(t) c_{d_x+d_y+d_z} )^{ \frac{-1}{d_x+d_y+d_z} }} 
2\exp \{ -\frac{\epsilon^2}{2304(1+19\epsilon/144)}(N-k-1) c_{d_x+d_z} r^{d_x+d_z} f_{XZ}(x,z) \} f_{\rho}(r) dr  \nonumber  \\
& \leq & 2\exp \{ -\frac{\epsilon^2}{4608}N c_{d_x+d_z} f_{XZ}(x,z) \left( (\log N)^2 ( N f_{XZ}(x,z) c_{d_x+d_z} )^{ \frac{-1}{d_x+d_z} }\right)^{d_x+d_z} \} \\
& \leq & 2\exp \{ -\frac{\epsilon^2}{4608}\left( \log N \right)^{2(d_x+d_z)} \}.
\end{eqnarray}
For sufficiently large N.

\subsubsection{Convergence of $I_4$}

Given $T_i = t = (x,y,z)$, and $\rho_{k,i}=r$ and $\hat{f}_{U}(y_i,z_i) = \frac{n'_{yz,i}}{(N-1)c_{d_y+d_z}\rho_{k,i}^{d_y+d_z}}$, we have,
\begin{eqnarray}
& & P\left( | \log f^q(Y_i,Z_i) - \log \hat{f}^q(Y_i,Z_i) | > \epsilon/6 | \rho_{k,i}=r, T_i=t \right) \\
&=& P\left( n_{yz,i} > (N-1)c_{d_y+d_z}r^{d_y+d_z} f^q(Y_i,Z_i)e^{\epsilon/6} | \rho_{k,i}=r, T_i=t \right) \\
&+& P\left( n_{yz,i} < (N-1)c_{d_y+d_z}r^{d_y+d_z} f^q(Y_i,Z_i)e^{\epsilon/6} | \rho_{k,i}=r, T_i=t \right).
\end{eqnarray}

We can write $n'_{yz,i}=n_{yz,i}^{(1)}+n_{yz,i}^{(2)}$, where,
\begin{eqnarray}
n_{yz,i}^{(1)} &=& \sum_{j:\| T_j-t \| < \rho_{i,k} } \frac{q_{XZ}(x_j,z_j)}{f_{XZ}(x_j,z_j)}  \\
n_{yz,i}^{(2)} &=& \sum_{j:\| T_j-t \| > \rho_{i,k} } \frac{q_{XZ}(x_j,z_j)}{f_{XZ}(x_j,z_j)} I \{ \| (Y_j-Y_i,Z_j-Z_i) \| < \rho_{k,i} \}.
\end{eqnarray}

Given $T_i=t$, Lemma \ref{lemma:dist_nyz} gives the probability distribution of the $n'_{yz,i}$. 

\begin{lemma} \label{lemma:dist_nyz}
Given $T_i = t = (x,y,z)$, and $\rho_{k,i}=r<r_N$ for some deterministic sequance of $r_N$ such that $\lim_{N \longleftarrow \infty} r_N = 0$ and for any $\epsilon > 0$, the distribution of $n_{yz,i}^{(2)}$ is $\sum_{l=k+1}^{N-1} V_l$, where $V_l$ are i.i.d random variables with $V_l \in [0,1/C_1]$ and mean $f_{XZ}(x,z)c_{d_x+d_z}r^{d_x+d_z} (1-\epsilon/8) \leq E [ U_l ] \leq f_{XZ}(x,z)c_{d_x+d_z}r^{d_x+d_z} (1-\epsilon/8) $ for sufficiently large $N$. 
\end{lemma}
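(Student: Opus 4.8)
The plan is to follow the structure of Lemma~5 of \cite{gao2016conditional}, which treats the unweighted count, and to add the bookkeeping for the importance weights. First I would condition on $T_i = t$ and on the $k$-nearest-neighbor radius $\rho_{k,i} = r$. By the order-statistics argument of that lemma, under this conditioning exactly $N-1-k$ of the remaining samples satisfy $\|T_j - t\| > r$, and these points are i.i.d.\ with density proportional to $f_{XYZ}$ restricted to the complement of the ball $B_T(t,r)$. Relabelling them $U_{k+1},\dots,U_{N-1}$ and setting
\[ V_l := \frac{q_{XZ}(X_l,Z_l)}{f_{XZ}(X_l,Z_l)}\, I\{\|(Y_l-y,\,Z_l-z)\| < r\}, \]
we get $n_{yz,i}^{(2)} = \sum_{l=k+1}^{N-1} V_l$ with the $V_l$ i.i.d.

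Boundedness is immediate from Assumption~\ref{assumptions}(e): since $f_{XZ} > C_1 q_{XZ}$ everywhere, the weight $q_{XZ}/f_{XZ}$ never exceeds $1/C_1$, so $V_l \in [0,1/C_1]$. For the mean, the key identity is $\frac{q_{XZ}(x',z')}{f_{XZ}(x',z')} f_{XYZ}(x',y',z') = q_{XZ}(x',z') f_{Y|XZ}(y'|x',z') = f^q_{XYZ}(x',y',z')$, which turns
\[ E[V_l] = \frac{1}{P(\|U-t\|>r)}\int_{\|u-t\|>r} \frac{q_{XZ}(x',z')}{f_{XZ}(x',z')}\, I\{\|(y'-y,z'-z)\|<r\}\, f_{XYZ}(u)\,du \]
into an integral of $f^q_{XYZ}$ over the same region. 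Dropping the constraint $\|u-t\|>r$ introduces only an $O(r^{d_x+d_y+d_z})$ error (the integrand is bounded by Assumption~\ref{assumptions}(d) and already supported on a $(d_y+d_z)$-ball of radius $r$), which is negligible next to $r^{d_y+d_z}$; integrating out $x'$ leaves $\int_{\|(y'-y,z'-z)\|<r} f^q_{YZ}(y',z')\,dy'\,dz'$, and a second-order Taylor expansion controlled by the Hessian bound of Assumption~\ref{assumptions}(c) makes this $f^q_{YZ}(y,z)\,c_{d_y+d_z}\,r^{d_y+d_z}(1+O(r^2))$. Since $P(\|U-t\|>r)\to 1$ and $r<r_N\to 0$, for all large $N$ the multiplicative corrections fall below $\epsilon/8$, yielding the stated two-sided bound on $E[V_l]$ with $f^q_{YZ}(y,z)\,c_{d_y+d_z}\,r^{d_y+d_z}$ as the leading term.

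The main obstacle is the first step --- justifying that, after conditioning on $\rho_{k,i}=r$, the $N-1-k$ points outside $B_T(t,r)$ are genuinely i.i.d.\ from the truncated density, despite the radius being a function of the sample. One does this by conditioning on the joint density of the order statistics ($k-1$ points strictly inside $B_T(t,r)$, one in the infinitesimal shell at radius $r$, the rest outside); this is precisely Lemma~5 of \cite{gao2016conditional}, which I would cite rather than reprove. Everything afterwards is the routine weighting computation above. One should also record the harmless split $n'_{yz,i}=n_{yz,i}^{(1)}+n_{yz,i}^{(2)}$: the $k$ nearest neighbors of $T_i$ in the full space automatically lie within distance $\rho_{k,i}$ in the $(Y,Z)$-marginal, so they contribute the bounded $O(k/C_1)$ term $n_{yz,i}^{(1)}$, while only the outside points contribute the genuinely random $n_{yz,i}^{(2)}$ analyzed here.
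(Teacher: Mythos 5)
Your argument is correct and takes essentially the same route as the paper, whose entire proof of this lemma is the citation ``see the proof of Lemma 6 in \cite{gao2016conditional}'': you reconstruct exactly that argument --- the order-statistics conditioning that makes the $N-1-k$ points outside $B_T(t,r)$ i.i.d.\ from the truncated density, plus the importance-weight bookkeeping via $V_l\in[0,1/C_1]$ from Assumption~\ref{assumptions}(e) and the identity $\frac{q_{XZ}}{f_{XZ}}f_{XYZ}=f^q_{XYZ}$ followed by the Hessian-controlled Taylor expansion. You also correctly identify the leading term of $E[V_l]$ as $f^q_{YZ}(y,z)\,c_{d_y+d_z}\,r^{d_y+d_z}\bigl(1\pm\epsilon/8\bigr)$, silently repairing the copy-paste typos in the lemma statement (which writes $f_{XZ}(x,z)c_{d_x+d_z}r^{d_x+d_z}$ and repeats $(1-\epsilon/8)$ on both sides).
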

\begin{proof}
See the proof of Lemma 6 in \cite{gao2016conditional}. 
\end{proof}

According to the Lemma \ref{lemma:dist_nyz} and following the same procedure as $I_3(t)$, the term $I_4(t)$ will also be bounded here in th same way. We have:
\begin{equation}I_4(t) \leq 2\exp \{ -\frac{C_1 \epsilon^2}{4608}\left( \log N \right)^{2(d_y+d_z)} \} \end{equation}

\subsubsection{Convergence of $I_5$}

Similar to the case of $I_4(t)$, given $T_i = t = (x,y,z)$, and $\rho_{k,i}=r$ and $\hat{f}_{U}(z_i) = \frac{n'_{z,i}}{(N-1)c_{d_z}\rho_{k,i}^{d_z}}$, we have,
\begin{eqnarray}
& & P\left( | \log f^q(Z_i) - \log \hat{f}^q(Z_i) | > \epsilon/6 | \rho_{k,i}=r, T_i=t \right) \\
&=& P\left( n_{z,i} > (N-1)c_{d_z}r^{d_z} f^q(Z_i)e^{\epsilon/6} | \rho_{k,i}=r, T_i=t \right) \\
&+& P\left( n_{z,i} < (N-1)c_{d_z}r^{d_z} f^q(Z_i)e^{\epsilon/6} | \rho_{k,i}=r, T_i=t \right).
\end{eqnarray}

We can write $n'_{z,i}=n_{z,i}^{(1)}+n_{z,i}^{(2)}$, where,
\begin{eqnarray}
n_{z,i}^{(1)} &=& \sum_{j:\| T_j-t \| < \rho_{i,k} } \frac{q_{XZ}(x_j,z_j)}{f_{XZ}(x_j,z_j)}  \\
n_{z,i}^{(2)} &=& \sum_{j:\| T_j-t \| > \rho_{i,k} } \frac{q_{XZ}(x_j,z_j)}{f_{XZ}(x_j,z_j)} I \{ \| Z_j-Z_i \| < \rho_{k,i} \}.
\end{eqnarray}

Following the same procedure as for $I_4(t)$, we will obtain the upper bound below for $I_5(t)$:
\begin{equation} I_5(t) \leq 2\exp \{ -\frac{C_1 \epsilon^2}{4608}\left( \log N \right)^{2(d_z)} \} \end{equation}

Now, we can write,
\begin{eqnarray} 
& & P \left( \sum_{i=1}^N \frac{\omega'_i}{N} | a_i  - \hat{a}_i | > \epsilon \right) \\
& \leq & N \int \left( I_1(t)+I_2(t)+I_3(t)+I_4(t)+I_5(t)  \right) f_T(t) dt \\
& \leq & kN^k \exp \{ -\frac{(\log N)^{d_x+d_y+d_z}}{4} \} \nonumber \\
& + & \left( 4CC' \frac{c_{d_x+d_y+d_z}}{\min\{c_{d_y+d_z},c_{d_x+d_z},c_{d_z} \}}  \right)^k ( \log N )^{2k(d_x+d_y+d_z)} N^{1-k\min\{ \frac{d_x+d_y}{d_z}, \frac{d_z}{d_x+d_y}, \frac{d_y}{d_x+d_z} \} } \nonumber \\
& + & 2N\exp \{ -\frac{\epsilon^2}{4608}\left( \log N \right)^{2(d_x+d_z)} + 4 N \exp \{ -\frac{C_1 \epsilon^2}{4608}\left( \log N \right)^{2(d_z)} \}.
\end{eqnarray}

If $k$ is chosen large enough so that $1-k\min\{ \frac{d_x+d_y}{d_z}, \frac{d_z}{d_x+d_y}, \frac{d_y}{d_x+d_z} \} <0$, then all the terms will converge to $0$ as $N$ goes to infinity, and the proof of Lemma \ref{lemma:3_terms_convergence} is complete.   




\bibliography{ref}

\begin{thebibliography}{10}

\bibitem{cover2012elements}
T.~M. Cover and J.~A. Thomas, {\em Elements of information theory}.
\newblock John Wiley \& Sons, 2012.

\bibitem{janzing2013quantifying}
D.~Janzing, D.~Balduzzi, M.~Grosse-Wentrup, and B.~Sch{\"o}lkopf, ``Quantifying
  causal influences,'' {\em The Annals of Statistics}, pp.~2324--2358, 2013.

\bibitem{spirtes2000causation}
P.~Spirtes, C.~N. Glymour, and R.~Scheines, {\em Causation, prediction, and
  search}.
\newblock MIT press, 2000.

\bibitem{holland1985statistics}
P.~W. Holland, C.~Glymour, and C.~Granger, ``Statistics and causal inference,''
  {\em ETS Research Report Series}, vol.~1985, no.~2, 1985.

\bibitem{dawid1979conditional}
A.~P. Dawid, ``Conditional independence in statistical theory,'' {\em Journal
  of the Royal Statistical Society. Series B (Methodological)}, pp.~1--31,
  1979.

\bibitem{granger1969investigating}
C.~W. Granger, ``Investigating causal relations by econometric models and
  cross-spectral methods,'' {\em Econometrica: Journal of the Econometric
  Society}, pp.~424--438, 1969.

\bibitem{geiger2015causal}
P.~Geiger, K.~Zhang, B.~Schoelkopf, M.~Gong, and D.~Janzing, ``Causal inference
  by identification of vector autoregressive processes with hidden
  components,'' in {\em International Conference on Machine Learning},
  pp.~1917--1925, 2015.

\bibitem{hosseini2016learning}
H.~Hosseini, S.~Kannan, B.~Zhang, and R.~Poovendran, ``Learning temporal
  dependence from time-series data with latent variables,'' in {\em Data
  Science and Advanced Analytics (DSAA), 2016 IEEE International Conference
  on}, pp.~253--262, IEEE, 2016.

\bibitem{eichler2012graphical}
M.~Eichler, ``Graphical modelling of multivariate time series,'' {\em
  Probability Theory and Related Fields}, vol.~153, no.~1-2, pp.~233--268,
  2012.

\bibitem{quinn2015directed}
C.~J. Quinn, N.~Kiyavash, and T.~P. Coleman, ``Directed information graphs,''
  {\em IEEE Transactions on information theory}, vol.~61, no.~12,
  pp.~6887--6909, 2015.

\bibitem{sun2015causal}
J.~Sun, D.~Taylor, and E.~M. Bollt, ``Causal network inference by optimal
  causation entropy,'' {\em SIAM Journal on Applied Dynamical Systems},
  vol.~14, no.~1, pp.~73--106, 2015.

\bibitem{rahimzamani2016network}
A.~Rahimzamani and S.~Kannan, ``Network inference using directed information:
  The deterministic limit,'' in {\em Communication, Control, and Computing
  (Allerton), 2016 54th Annual Allerton Conference on}, pp.~156--163, IEEE,
  2016.

\bibitem{qiu2012understanding}
X.~Qiu, S.~Ding, and T.~Shi, ``From understanding the development landscape of
  the canonical fate-switch pair to constructing a dynamic landscape for
  two-step neural differentiation,'' {\em PloS one}, vol.~7, no.~12, p.~e49271,
  2012.

\bibitem{gao2016causal}
W.~Gao, S.~Kannan, S.~Oh, and P.~Viswanath, ``Causal strength via shannon
  capacity: Axioms, estimators and applications,'' in {\em Proceedings of the
  33rd International Conference on Machine Learning}, 2016.

\bibitem{kim2017discovering}
H.~Kim, W.~Gao, S.~Kanan, S.~Oh, and P.~Viswanath, ``Discovering potential
  correlations via hypercontractivity,'' {\em arXiv preprint arXiv:1709.04024},
  2017.

\bibitem{anantharam2013maximal}
V.~Anantharam, A.~Gohari, S.~Kamath, and C.~Nair, ``On maximal correlation,
  hypercontractivity, and the data processing inequality studied by erkip and
  cover,'' {\em arXiv preprint arXiv:1304.6133}, 2013.

\bibitem{polyanskiy2017strong}
Y.~Polyanskiy and Y.~Wu, ``Strong data-processing inequalities for channels and
  bayesian networks,'' in {\em Convexity and Concentration}, pp.~211--249,
  Springer, 2017.

\bibitem{tishby2000information}
N.~Tishby, F.~C. Pereira, and W.~Bialek, ``The information bottleneck method,''
  {\em arXiv preprint physics/0004057}, 2000.

\bibitem{gao2016conditional}
W.~Gao, S.~Kannan, S.~Oh, and P.~Viswanath, ``Conditional dependence via
  shannon capacity: Axioms, estimators and applications,'' {\em arXiv preprint
  arXiv:1602.03476}, 2016.

\bibitem{kidambi2015shannon}
R.~Kidambi and S.~Kannan, ``On shannon capacity and causal estimation,'' in
  {\em Communication, Control, and Computing (Allerton), 2015 53rd Annual
  Allerton Conference on}, pp.~988--992, IEEE, 2015.

\bibitem{mooij2016distinguishing}
J.~M. Mooij, J.~Peters, D.~Janzing, J.~Zscheischler, and B.~Sch{\"o}lkopf,
  ``Distinguishing cause from effect using observational data: methods and
  benchmarks,'' {\em The Journal of Machine Learning Research}, vol.~17, no.~1,
  pp.~1103--1204, 2016.

\bibitem{devroye1984consistency}
L.~Devroye and C.~S. Penrod, ``The consistency of automatic kernel density
  estimates,'' {\em The Annals of Statistics}, pp.~1231--1249, 1984.

\bibitem{sheather1991reliable}
S.~J. Sheather and M.~C. Jones, ``A reliable data-based bandwidth selection
  method for kernel density estimation,'' {\em Journal of the Royal Statistical
  Society. Series B (Methodological)}, pp.~683--690, 1991.

\bibitem{Kra04}
A.~Kraskov, H.~St{\"o}gbauer, and P.~Grassberger, ``Estimating mutual
  information,'' {\em Physical review E}, vol.~69, no.~6, p.~066138, 2004.

\bibitem{khan2007relative}
S.~Khan, S.~Bandyopadhyay, A.~R. Ganguly, S.~Saigal, D.~J. Erickson~III,
  V.~Protopopescu, and G.~Ostrouchov, ``Relative performance of mutual
  information estimation methods for quantifying the dependence among short and
  noisy data,'' {\em Physical Review E}, vol.~76, no.~2, p.~026209, 2007.

\bibitem{kozachenko1987sample}
L.~Kozachenko and N.~N. Leonenko, ``Sample estimate of the entropy of a random
  vector,'' {\em Problemy Peredachi Informatsii}, vol.~23, no.~2, pp.~9--16,
  1987.

\bibitem{gao2017demystifying}
W.~Gao, S.~Oh, and P.~Viswanath, ``Demystifying fixed k-nearest neighbor
  information estimators,'' in {\em Information Theory (ISIT), 2017 IEEE
  International Symposium on}, pp.~1267--1271, IEEE, 2017.

\bibitem{gao2017estimating}
W.~Gao, S.~Kannan, S.~Oh, and P.~Viswanath, ``Estimating mutual information for
  discrete-continuous mixtures,'' {\em arXiv preprint arXiv:1709.06212}, 2017.

\bibitem{frenzel2007partial}
S.~Frenzel and B.~Pompe, ``Partial mutual information for coupling analysis of
  multivariate time series,'' {\em Physical review letters}, vol.~99, no.~20,
  p.~204101, 2007.

\bibitem{valiant2011estimating}
G.~Valiant and P.~Valiant, ``Estimating the unseen: an n/log (n)-sample
  estimator for entropy and support size, shown optimal via new clts,'' in {\em
  Proceedings of the forty-third annual ACM symposium on Theory of computing},
  pp.~685--694, ACM, 2011.

\bibitem{jiao2015minimax}
J.~Jiao, K.~Venkat, Y.~Han, and T.~Weissman, ``Minimax estimation of
  functionals of discrete distributions,'' {\em IEEE Transactions on
  Information Theory}, vol.~61, no.~5, pp.~2835--2885, 2015.

\bibitem{wu2016minimax}
Y.~Wu and P.~Yang, ``Minimax rates of entropy estimation on large alphabets via
  best polynomial approximation,'' {\em IEEE Transactions on Information
  Theory}, vol.~62, no.~6, pp.~3702--3720, 2016.

\bibitem{watanabe1960information}
S.~Watanabe, ``Information theoretical analysis of multivariate correlation,''
  {\em IBM Journal of research and development}, vol.~4, no.~1, pp.~66--82,
  1960.

\bibitem{chan2015multivariate}
C.~Chan, A.~Al-Bashabsheh, J.~B. Ebrahimi, T.~Kaced, and T.~Liu, ``Multivariate
  mutual information inspired by secret-key agreement,'' {\em Proceedings of
  the IEEE}, vol.~103, no.~10, pp.~1883--1913, 2015.

\bibitem{moon2017ensemble}
K.~R. Moon, K.~Sricharan, and A.~O. Hero~III, ``Ensemble estimation of mutual
  information,'' {\em arXiv preprint arXiv:1701.08083}, 2017.

\end{thebibliography}
\bibliographystyle{ieeetr}

\end{document}